\newcommand{\R}{{\mathbb{R}}}
\newcommand{\Z}{{\mathbb{Z}}}
\newcommand{\N}{{\mathbb{N}}}
\newcommand{\C}{{\mathbb{C}}}
\newcommand{\I}{{\mathbb{I}}}
\newcommand{\CP}{{\mathbb{C}}{{P}}}
\newcommand{\RP}{{\mathbb{R}{{P}}}}
\newcommand{\beq}{\begin{equation}}
\newcommand{\eeq}{\end{equation}}
\newcommand{\bea}{\begin{eqnarray}}
\newcommand{\eea}{\end{eqnarray}}
\newcommand{\ben}{\begin{eqnarray*}}
\newcommand{\een}{\end{eqnarray*}}
\newcommand{\ra}{\rightarrow}
\newcommand{\cd}{\partial}
\newcommand{\wt}{\widetilde}
\newcommand{\wh}{\widehat}
\newcommand{\less}{\backslash}
\newcommand{\M}{{\sf M}}
\newcommand{\msn}{{\sf M}_n}
\newcommand{\hess}{{\sf Hess}}
\newcommand{\rat}{{\sf Rat}}
\newcommand{\su}{{\mathfrak{su}}}
\newcommand{\so}{{\mathfrak{so}}}
\newcommand{\ric}{{\sf Ric}}
\newcommand{\en}{{\cal E}}
\def \d{\mathrm{d}}
\newcommand{\dstar}{\delta}
\newcommand{\ip}[1]{\langle #1 \rangle}
\newcommand{\ket}[1]{\left| #1\right.\rangle}
\newcommand{\Cas}{\mathscr{C}}
\newcommand{\jac}{\mathscr{J}}
\newcommand{\rr}{\mathscr{R}}
\newcommand{\dd}{\mathscr{D}}
\newcommand{\vol}{{\rm vol}}
\newcommand{\rot}{{\sf R}}
\newcommand{\vphi}{\varphi}
\newcommand{\ol}{\overline}
\newcommand{\xivec}{\mbox{\boldmath{$\xi$}}}
\newcommand{\sigvec}{\mbox{\boldmath{$\sigma$}}}
\newcommand{\lamvec}{\mbox{\boldmath{$\lambda$}}}
\newcommand{\tauvec}{\mbox{\boldmath{$\tau$}}}
\newcommand{\thetavec}{\mbox{\boldmath{$\theta$}}}
\newcommand{\Omegavec}{\mbox{\boldmath{$\Omega$}}}
\newcommand{\kvec}{\mbox{\boldmath{$k$}}}
\newcommand{\vvec}{\mbox{\boldmath{$v$}}}
\newcommand{\Xvec}{\mbox{\boldmath{$X$}}}
\newcommand{\cdv}{\mbox{\boldmath{$\cd$}}}
\newcommand{\vc}[1]{\mbox{\boldmath{$#1$}}}
\newcommand{\tr}{{\rm tr}\, }
\newcommand{\cosec}{{\rm cosec}\, }
\newcommand{\spec}{{\rm spec}\, }
\newcommand{\id}{{\rm Id}}
\theoremstyle{plain}
\newtheorem{thm}{Theorem}
\newtheorem{lemma}[thm]{Lemma}
\newtheorem{prop}[thm]{Proposition}
\newcommand{\news}{\setcounter{equation}{0}}
\newenvironment{proof}{\noindent{\it Proof:\, }}{\hfill$\Box$\vspace*{0.5cm}
}
\begin{document}

\title{Quantum lump dynamics on the two-sphere}
\author{S. Krusch\thanks{E-mail: {\tt S.Krusch@kent.ac.uk}} \\
School of Mathematics, Statistics and Actuarial Science, University of 
Kent\\
Canterbury CT2 7NF, England\\ \\
J.M. Speight\thanks{E-mail: {\tt speight@maths.leeds.ac.uk}}\\
School of Mathematics, University of Leeds\\
Leeds LS2 9JT, England}

\date{}
\maketitle
Version: November 1, 2012

\begin{abstract}
It is well known that the low-energy classical 
dynamics of solitons of Bogomol'nyi
type is well approximated by geodesic motion in $\M_n$, the moduli space
of static $n$-solitons. There is an obvious quantization of this dynamics
wherein the wavefunction $\psi:\M_n\ra\C$ evolves according to the
Hamiltonian $H_0=\frac12\triangle$, where $\triangle$ is the Laplacian on 
$\M_n$. Born-Oppenheimer reduction of analogous mechanical 
systems suggests, however,
that this simple Hamiltonian should receive corrections including
$\kappa$, the scalar curvature of $\M_n$, and $\Cas$,
the $n$-soliton Casimir energy, which are usually difficult or impossible
to compute, and whose effect on the energy spectrum is unknown.
This paper analyzes the spectra of $H_0$ and
two corrections to it suggested by work of Moss and Shiiki, namely
$H_1=H_0+\frac14\kappa$ and $H_2=H_1+\Cas$, in the simple but 
nontrivial case of a single $\CP^1$ lump moving on the two-sphere. Here
$\M_1=\rat_1$, a noncompact k\"ahler 6-manifold invariant
under an $SO(3)\times SO(3)$ action, whose geometry
is well understood. The symmetry gives rise to two conserved angular
momenta, spin and isospin. By exploiting the diffeomorphism
$\rat_1\cong TSO(3)$, a hidden isometry of $\rat_1$
is found 
which implies that all three energy spectra are symmetric under
spin-isospin interchange. The Casimir energy is found exactly
on an $SO(3)$ submanifold of  $\rat_1$, using standard results from 
harmonic map theory and zeta function regularization, and 
approximated numerically on the rest of $\rat_1$. The lowest 19
eigenvalues of $H_i$ are found, and their spin-isospin and parity
compared for $i=0,1,2$. It is found that the curvature corrections in 
$H_1$ lead to a qualitatively unchanged low-level spectrum while the 
Casimir energy in $H_2$ leads to significant changes. 
The scaling behaviour of the spectra under changes in the radii of the
domain and target spheres is analyzed, and it is found that the
disparity between the spectra of $H_1$ and $H_2$ is reduced when the
target sphere is made smaller.
\end{abstract}

\maketitle

\section{Introduction}
\label{sec:intro}
\news

Many field theories arising naturally in theoretical high energy physics may
be said to be of Bogomol'nyi type. For such theories there is a topological
lower bound on the energy of field configurations, and this bound is attained
only by solutions of a first order ``self-duality'' equation, the so-called
solitons of the theory. The solitons are stable by virtue of their
energy-minimizing property, and are generically spatially localized lumps of
energy with strongly particle-like characteristics. 
Examples are magnetic monopoles, abelian Higgs vortices and sigma model lumps.
There is a well-developed geometric framework for studying the classical
low-energy 
dynamics of such solitons, proposed originally for monopoles by Manton 
\cite{man1}, called the geodesic approximation.
Here, $n$-soliton trajectories are approximated by
geodesics in the moduli space $\msn$ of static $n$-solitons, with respect to
the metric induced by the kinetic energy functional of the field theory
(usually called the $L^2$ metric). This approach has been extremely fruitful, 
providing both important qualitative
insight into topological soliton dynamics and good agreement with
numerical analysis of the full field theory. In the case of vortices and
monopoles, the geodesic approximation is backed by rigorous analysis
\cite{stu1,stu2}. For a comprehensive review, see \cite{mansut}.

When one comes to quantize the low-energy 
dynamics of such solitons, two approaches
are possible. Since geodesic motion on $\msn$ captures the classical
soliton dynamics so well, it is natural simply to quantize that
\cite{gibman}. Then
the quantum $n$-soliton state is specified by a wavefunction $\psi:\msn\ra\C$,
evolving subject to the Hamiltonian
\beq\label{mant}
H_{geo}=\tfrac12\triangle,
\eeq
where $\triangle$ is the Hodge Laplacian on $\msn$. This has the virtue of 
simplicity, but it ignores the degrees
of freedom normal to the moduli space completely. An alternative is to
make a low energy reduction of the full quantum field theory by means
of the Born-Oppenheimer approximation. This has been carried out
for sigma model lumps by Moss and Shiiki \cite{mosshi}. Arguing by analogy
with finite dimensional mechanical systems, they find that, once again,
the low energy quantum dynamics of $n$ solitons can be described by a 
wavefunction on $\msn$, but that the Hamiltonian is
\beq\label{mosh}
H_{BO}=\tfrac12\triangle+\tfrac14\kappa-\tfrac18\|\kvec\|^2+U+\cdots,
\eeq
where $\kappa$ is the scalar curvature of $\msn$, $\kvec$ is the mean 
curvature of the embedding of $\msn$ into the (infinite dimensional)
field configuration space 
and $U$ is a potential on
$\msn$ incorporating the residual effects of the normal modes. This
potential is rather complicated, but its dominant term is the 
$n$-soliton Casimir
energy, that is, the total zero-point energy of the normal modes to the
static $n$-soliton.

The aim of the current paper is to 
compare the spectra of $H_{geo}$ and $H_{BO}$, to determine to
 what extent the extra terms
in (\ref{mosh}) change the quantum energy spectrum. 
Of course, one expects the numerical
values of the energy eigenvalues to change to some extent. In itself, this
is not particularly important. We will be more interested in the
{\em qualitative} features of the energy spectrum. For example, on 
a noncompact moduli space, one could easily imagine that $H_{geo}$ may have
no bound states, while $H_{BO}$ does. In the extreme case, one could find that
the spectrum of $H_{geo}$ is continuous while that of $H_{BO}$ is discrete.
Clearly, $H_{geo}$ would be a disastrously bad approximation in this case.
Other, more refined, qualitative features can also be compared, for example,
the dimension and symmetry properties of the eigenspaces of $H_{geo}$ and
$H_{BO}$, ordered by energy. 

We address this issue for a specific example, namely a single $\CP^1$ 
lump moving on the two sphere (spacetime $S^2\times\R$). This choice
has several mathematical advantages, making it simple enough
to be tractable, but not too simple to provide 
a nontrivial test of the truncation $H_{geo}$.
 First, the static $n$-solitons are
explicitly known -- they are rational maps, so $\msn=\rat_n$. Second, by
choosing physical space to be compact ($S^2$ here) rather than $\R^2$
(as for the model on Minkowski space $\R^{2+1}$), we ensure that the
$L^2$ metric on $\rat_n$ is well defined (there are no non-normalizable
zero modes). Third, we may think of the $L^2$ metric (formally) as the 
induced metric on $\rat_n$ as a complex submanifold of $(\CP^1)^{S^2}$, 
the infinite dimensional space
of maps $S^2\ra\CP^1$. This latter is (formally) an infinite product of
k\"ahler manifolds, so is k\"ahler, so, as noted by Moss and Shiiki 
\cite{mosshi},
the very awkward extrinsic curvature term is absent, $\|\kvec\|^2=0$.
Fourth,
the $L^2$ metric on $\rat_1$ 
is highly symmetric, so the $L^2$ metric and its scalar curvature can be
computed explicitly. On the other hand, $\rat_1$ is not {\em too} symmetric:
the scalar curvature and Casimir energy are non-constant functions of a single
variable, the lump width, so that (in contrast to the case of monopoles 
and
vortices) even the $n=1$ sector provides a nontrivial test.

To be precise, we will compute the spectra of three approximations to
$H_{BO}$, 
\beq
\label{Hamiltonians}
H_0=H_{geo}=\tfrac12\triangle,\qquad
H_1=\tfrac12\triangle +\tfrac14\kappa,\qquad
H_2=\tfrac12\triangle+\tfrac14\kappa+\Cas,
\eeq
where $\Cas$ is the Casimir energy. We will find that the spectrum 
of $H_0$ can be thought of as a perturbed version of the spectrum of the
Laplacian on $\CP^3$ (equipped with the Fubini-Study metric). Quantum lumps
possess two integer conserved angular momentum quantum numbers, which we call
isospin $k$ and spin $s$, associated with the rotational 
symmetries of the target space $\CP^1$ and the domain $S^2$ respectively. 
We will show that for each choice of $k,s$, the spectral problem
for $H_i$, $i=0,1,2$, reduces to a matrix Sturm-Liouville problem of 
dimension $2\min\{k,s\}+1$. We find that this problem is symmetric under 
interchange of $k$ and $s$, so the energy spectra of $H_0,H_1,H_2$ all
possess this symmetry. A careful analysis of the boundary conditions
for the Sturm-Liouville problem shows that the boundary conditions appropriate
for $H_1$ are the same as for $H_2$, but different from $H_0$: including
scalar curvature changes the boundary conditions. 
Nonetheless, the spectrum is discrete in all three cases. We have
computed (numerically) the first 19 energy eigenvalues for each
Hamiltonian and tabulated $\{k,s\}$ for the eigenstates in order of
increasing energy. The spectra of $H_0$ and $H_1$
are remarkably similar. The order of states is the
same apart from two transpositions. The Casimir energy leads to 
significant changes in the spectrum. However, the effect of the Casimir 
energy becomes less important if the radius of the target space is 
decreased.

\section{The one-lump moduli space}
\news

Identifying $\CP^1\cong S^2$ and using complex stereographic coordinates
$z=x+iy$ and $W$ on domain and codomain respectively, we may identify a map
$\phi:S^2\ra \CP^1\cong S^2$ with a complex function $W(z)$. 
The kinetic and potential
energy functionals of the $\CP^1$ model on $S^2$ are then
\beq
T=\frac12\int 
\frac{4|W_t|^2}{(1+|W|^2)^2}\frac{4{\rm d}x{\rm d}y}{(1+|z|^2)^2},\quad
V=\frac12\int \frac{4(|W_x|^2+|W_y|^2)}{(1+|W|^2)^2}{\rm d}x {\rm d}y
\eeq
respectively. Note that, in contrast with earlier work
\cite{bap,macspe,spe1,spe3} we have given
both domain and codomain spheres the round metric of radius $1$. This will 
allow us to make direct use of results in the harmonic maps literature
when we come to compute the Casimir energy of a lump. We
will consider how our results change when the radii of the domain
and target spheres are altered in section \ref{scaling}.

A theorem of Lichnerowicz \cite{lic} 
shows that, if $\phi$ has topological degree $n$ (assumed,
without loss of generality, to be non-negative), then $V\geq 4\pi n$,
with equality if and only if $\phi$ is holomorphic. So degree $n$ holomorphic
maps $\phi:S^2\ra S^2$ minimize $V$ within their homotopy class: these are
the solitons of the model, and the ``self-duality'' equation is the
Cauchy-Riemann equation. In terms of $W$ and $z$, degree $n$ holomorphic
maps $S^2\ra S^2$ are rational maps of (algebraic) degree $n$, that is,
\beq
W=\frac{a_1z^n+\cdots+a_{n+1}}{a_{n+2}z^n+\cdots+a_{2n+2}},
\eeq
where $a_i$ are $2n+2$ complex constants, $a_1$ and $a_{n+2}$ do not both
vanish, and the numerator and denominator have no common roots. 
One interprets this physically as a static superposition of $n$ lumps.
So the moduli space of static $n$-lumps is $\rat_n$, the space of
degree $n$ rational maps. 

The $L^2$
metric on $\rat_n$ is defined by restricting the kinetic energy functional
$T$ to fields $W(t,z)$ which at each fixed $t$ are degree $n$ rational
maps. Explicitly, in the chart where $a_1\neq 0$, one can define
local complex coordinates $q_i=a_i/a_1$, $i=2,\ldots,2n+2$ on $\rat_n$. Then
allowing $q_2,\ldots,q_{2n+2}$ to vary with time, one substitutes
\beq
W(t,z)=\frac{z^n+q_2(t)z^{n-1}+\cdots+q_{n+1}(t)}
{q_{n+2}(t)z^n+q_{n+3}(t)z^{n-1}+\cdots+q_{2n+2}(t)}
\eeq
into $T$, to obtain 
\beq\label{marmes}
T=\frac12\sum_{ij}\gamma_{ij}\dot{q}_i\ol{\dot{q_j}},\qquad
\gamma_{ij}=\int \frac{4}{(1+|W|^2)^2}
\frac{\cd W}{\cd q_i}\ol{\frac{\cd W}{\cd q_j}}
\frac{4{\rm d}x{\rm d}y}{(1+|z|^2)^2}.
\eeq
The $L^2$ metric is $\gamma=\sum_{i,j}\gamma_{ij}{\rm d}q_i {\rm d}\ol{q}_j$.
See \cite{spe3} for a coordinate free definition of $\gamma$. 
The metric $\gamma$ is manifestly Hermitian. In fact, it is
k\"ahler \cite{rub,spe3}. It is known to be geodesically
incomplete \cite{sadspe}, so the classical geodesic approximation predicts
lumps may collapse to infinitely narrow spikes in finite time.
Numerical simulation and rigorous analysis (albeit on domain $\C$)
confirm that lump collapse can occur, though the geodesic approximation 
gets the fine detail of the singularity formation process wrong
\cite{bizchmtab,linsad,rodste}. 

There is an
isometric action of $G=SO(3)\times SO(3)$ on $\rat_n$, induced by the natural
$SO(3)$ actions on the domain and target spheres. On $\rat_1$ this action has
cohomogeneity 1 (generic $G$ orbits have codimension 1), and, in fact,  almost
completely determines $\gamma$. Consequently, an explicit formula for
$\gamma$ is known in this case, and the geometry is particularly well
understood. For $n=1$, the no common roots condition on the rational map
$W(z)=(a_1z+a_2)/(a_3z+a_4)$ is $a_1a_4-a_2a_3\neq0$, so we may identify each
map with a projective equivalence class $[L]$
of $GL(2,\C)$ matrices,
\beq
\frac{a_1z+a_2}{a_3z+a_4}\leftrightarrow
\left[\:\left(\begin{array}{cc}a_1&a_2\\a_3&a_4\end{array}\right)\:\right].
\eeq
 Hence
$\rat_1\cong PL(2,\C)$. 
The action of $G$ on $\rat_1$ corresponds, under this identification, with the
natural action of $PU(2)\times PU(2)$ on $PL(2,\C)$:
\beq
([U_1],[U_2]):[L]\mapsto [U_1LU_2^{-1}].
\eeq

Now every $[L]\in PL(2,\C)$ has a unique polar decomposition
\beq
[L]=[U(\Lambda\I_2+\lamvec\cdot\tauvec)],
\eeq
where $([U],\lamvec)\in PU(2)\times\R^3$,
$\Lambda=\sqrt{1+\lambda^2}$,
$\lambda=|\lamvec|$, and $\tau_1,\tau_2,\tau_3$ are the
Pauli spin matrices. Hence $\rat_1\cong PU(2)\times\R^3$.
Having chosen a basis $\{\frac{i}{2}\tau_a\}$ for $\su(2)$, we have a
canonical identification $PU(2)\cong SO(3)$ under which $[U]$ is
identified with the orthogonal transformation $Ad_U:\su(2)\ra\su(2)$.
Hence $\rat_1\cong SO(3)\times\R^3$, and the $G$ action is
\beq
(R_1,R_2):(R,\lamvec)\mapsto(R_1RR_2^{-1},R_2\lamvec).
\eeq
From this we see that the $G$-orbits are level sets of $\lambda$, generically
diffeomorphic
to $SO(3)\times S^2$ (when $\lambda>0$), the only exception being
$\lambda=0$,
which is diffeomorphic to $SO(3)$.
Physically, the lump corresponding to $(R,\lamvec)\in SO(3)\times\R^3$ 
has maximum
energy density
 at $-\lamvec/\lambda\in S^2$, sharpness proportional to $\lambda$ and
internal orientation $R$. The $\lambda=0$ lumps have uniform energy
density. 

The following explicit characterization of $G$ invariant
k\"ahler metrics on $\rat_1$ was established in \cite{spe3} (see
\cite{bap} for an alternative viewpoint, exploiting more directly the
covering $SL(2,\C)\ra\rat_1$):

\begin{prop}\label{metprop} Let $\gamma$ be an $SO(3)\times SO(3)$ 
invariant 
k\"ahler metric on $\rat_1$. Then
\beq
\gamma=A_1\, {\rm d}\lamvec\cdot {\rm d}\lamvec+
A_2(\lamvec\cdot {\rm d}\lamvec)^2+
A_3\, \sigvec\cdot\sigvec+A_4(\lamvec\cdot\sigvec)^2+
A_1\lamvec\cdot (\sigvec\times {\rm d}\lamvec),
\eeq
where $A_1,\ldots,A_4$ are smooth functions of $\lambda$ only, all determined
from the single function $A_1=A(\lambda)$ by the relations
\beq
A_2=\frac{A(\lambda)}{1+\lambda^2}+\frac{A'(\lambda)}{\lambda},
\quad A_3=\frac{1}{4}(1+2\lambda^2)A(\lambda),
\quad A_4=\frac{1}{4\lambda}(1+\lambda^2)A'(\lambda).
\eeq
Here $\sigma_1,\sigma_2,\sigma_3$ are the left invariant one forms on
$SO(3)$ dual to the basis $\{\frac{i}{2}\tau_a:a=1,2,3\}$ for $\su(2)\cong
\so(3)$, $\times$ and $\cdot$ denote the vector and scalar product on $\R^3$
respectively and juxtaposition of one-forms denotes symmetrized tensor
product. 
\end{prop}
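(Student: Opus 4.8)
The plan is to split the proof into a symmetry reduction followed by imposition of the Kähler condition. First I would exploit the coframe $\{\sigma_1,\sigma_2,\sigma_3,{\rm d}\lambda_1,{\rm d}\lambda_2,{\rm d}\lambda_3\}$ of $T^*\rat_1$ provided by $\rat_1\cong SO(3)\times\R^3$, writing a general symmetric $2$-tensor as $\gamma=\sum_{a,b}g_{ab}\,\sigma_a\sigma_b+2\sum_{a,b}h_{ab}\,\sigma_a{\rm d}\lambda_b+\sum_{a,b}k_{ab}\,{\rm d}\lambda_a{\rm d}\lambda_b$ with coefficients a priori functions on $SO(3)\times\R^3$. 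Invariance under the left factor $R_1$ is the easy half: the $\sigma_a$ are left-invariant and ${\rm d}\lambda_a,\lamvec$ are untouched by $R_1$, so invariance of $\gamma$ is equivalent to invariance of the coefficients under left translation, forcing them to be independent of the $SO(3)$ coordinate, i.e.\ functions of $\lamvec$ alone. For the right factor one computes from $R\mapsto R_1RR_2^{-1}$ that the Maurer--Cartan forms transform by $\sigvec\mapsto R_2\sigvec$ (under $\su(2)\cong\so(3)\cong\R^3$), while ${\rm d}\lamvec\mapsto R_2{\rm d}\lamvec$ and $\lamvec\mapsto R_2\lamvec$ rotate by the same $R_2$. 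Thus $\gamma$ must be an $SO(3)$-invariant symmetric bilinear expression in the vector-valued one-forms $\sigvec$ and ${\rm d}\lamvec$, with coefficients depending $SO(3)$-equivariantly on $\lamvec$. By the first fundamental theorem for $SO(3)$ (invariants generated by dot and triple products, with $\lambda=|\lamvec|$ the only scalar), the complete list of such tensors has seven members: ${\rm d}\lamvec\cdot{\rm d}\lamvec$, $(\lamvec\cdot{\rm d}\lamvec)^2$, $\sigvec\cdot\sigvec$, $(\lamvec\cdot\sigvec)^2$, $\sigvec\cdot{\rm d}\lamvec$, $(\lamvec\cdot\sigvec)(\lamvec\cdot{\rm d}\lamvec)$ and $\lamvec\cdot(\sigvec\times{\rm d}\lamvec)$, each multiplied by a function of $\lambda$.

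Next I would bring in the $G$-invariant complex structure $J$ that $\rat_1$ carries as $PL(2,\C)$ (equivalently from its embedding in $(\CP^1)^{S^2}$), and express $J$ in the above coframe. The computational input here is the differential of the polar decomposition $[L]=[U(\Lambda\I_2+\lamvec\cdot\tauvec)]$, $\Lambda=\sqrt{1+\lambda^2}$, which realizes $J$ as an $SO(3)$-invariant endomorphism interchanging the $\sigvec$ and ${\rm d}\lamvec$ blocks with $\Lambda$-dependent coefficients. Imposing the Hermitian condition $\gamma(J\cdot,J\cdot)=\gamma(\cdot,\cdot)$ is then a pointwise linear condition on the seven coefficient functions: I expect it to annihilate the two symmetric mixed terms $\sigvec\cdot{\rm d}\lamvec$ and $(\lamvec\cdot\sigvec)(\lamvec\cdot{\rm d}\lamvec)$ and to tie the coefficient of the surviving cross term $\lamvec\cdot(\sigvec\times{\rm d}\lamvec)$ to that of ${\rm d}\lamvec\cdot{\rm d}\lamvec$, i.e.\ to $A_1$. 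This already reduces $\gamma$ to the displayed five-term form with four functions $A_1,\dots,A_4$ of $\lambda$.

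Finally I would impose closedness. Form the invariant Kähler two-form $\omega=\gamma(J\cdot,\cdot)$ and compute $d\omega$ using the structure equations $d\sigma_a=-\half\epsilon_{abc}\,\sigma_b\wedge\sigma_c$ and $d({\rm d}\lambda_a)=0$; setting the independent three-form components to zero yields a small first-order system in $A_1,\dots,A_4$, which one solves algebraically for $A_2,A_3,A_4$ in terms of $A=A_1$ and $A'$ to obtain exactly the stated relations. The main obstacle is this last step together with the explicit determination of $J$: the bookkeeping of the triple-product cross term against the $\Lambda$-dependent entries of $J$, in a non-coordinate frame, is where sign and normalization slips are most likely, and it is here that the nontrivial $\lambda$-dependence ($A_3\propto(1+2\lambda^2)A$, $A_4\propto(1+\lambda^2)A'/\lambda$) is generated. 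A good consistency check is that the resulting relations leave precisely one undetermined function $A(\lambda)$, matching the expected one-function freedom for a cohomogeneity-one invariant Kähler metric.
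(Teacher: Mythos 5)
Your proposal cannot be checked against an internal argument because the paper does not actually prove Proposition \ref{metprop}: it is quoted from \cite{spe3} (Proposition 3.1 there), with \cite{bap} cited as an alternative route via the covering $SL(2,\C)\ra\rat_1$. The hints this paper does give about the cited proof (the ``symmetry argument'' invoked in proving Proposition \ref{invmetprop}, and the unitary frame $\{e_a,Je_a\}$ from section 4.1 of \cite{spe3}) indicate that it follows exactly your strategy: classify invariant symmetric $2$-tensors, impose Hermiticity with respect to the invariant complex structure, then impose closedness of the K\"ahler form. Your symmetry-reduction step is correct and complete: at the reference point $p_\lambda=(\I_3,(0,0,\lambda))$, $\lambda>0$, the isotropy group is the diagonal $SO(2)$, and the space of isotropy-invariant symmetric bilinear forms on $T_{p_\lambda}\rat_1$ is exactly $7$-dimensional, spanned by the values of your seven tensors.

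The one point that needs correcting is your expectation of what Hermiticity yields. Computing $J$ at $p_\lambda$ from $J\,\delta L=i\,\delta L$ and the polar decomposition gives (up to sign conventions)
\beq
J\theta_1=\tfrac{\lambda}{\Lambda}\theta_2-\tfrac{1}{2\Lambda}\cd_1,\qquad
J\theta_2=-\tfrac{\lambda}{\Lambda}\theta_1-\tfrac{1}{2\Lambda}\cd_2,\qquad
J\theta_3=-\tfrac{\Lambda}{2}\cd_3,
\eeq
so $J$ is not purely block-interchanging (it mixes $\theta_1,\theta_2$ among themselves), and the condition $\gamma(J\cdot,J\cdot)=\gamma$ works out to exactly four relations: the two symmetric mixed terms vanish (as you predicted), $A_3=\frac14A_1+\frac{\lambda^2}{2}c$ where $c$ denotes the cross-term coefficient, and $A_3+\lambda^2A_4=\frac{\Lambda^2}{4}(A_1+\lambda^2A_2)$. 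In particular Hermiticity does \emph{not} force $c=A_1$; it leaves three free functions, not four, and your claimed intermediate form (cross coefficient equal to $A_1$, with $A_2,A_3,A_4$ unconstrained) neither implies nor is implied by these relations. The missing relation $c=A_1$, together with the derivative relation $A_2=A_1/\Lambda^2+A_1'/\lambda$, is what $d\omega=0$ supplies; combined with the two Hermiticity relations these reproduce precisely the stated formulas (a useful check: they give $A_3+\lambda^2A_4=\frac14(\lambda^2+\Lambda^2)A+\frac14\lambda\Lambda^2A'$, the function $B$ of equation (\ref{Bdef})). Your plan still succeeds, because you impose both conditions in the end---and had you imposed closedness on a non-Hermitian five-term ansatz the computation would have stopped you, since $\gamma(J\cdot,\cdot)$ is antisymmetric only when $\gamma$ is Hermitian---but the correct bookkeeping is $7\to3\to1$, not $7\to4\to1$. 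Finally, one convention slip of the kind you anticipated: with this paper's conventions ($[\theta_a,\theta_b]=-\epsilon_{abc}\theta_c$) the structure equations are $d\sigma_1=+\sigma_2\wedge\sigma_3$ and cyclic, opposite in sign to the ones you wrote.
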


So symmetries and the k\"ahler property determine the metric up to a single 
function $A(\lambda)$, which we may think of as the squared
length of the vector $\cd/\cd\lambda_1$ at the point 
$(\I_3,(0,0,\lambda))\in SO(3)\times\R^3$, corresponding to the
rational map 
\beq
W(z)=\mu z,\qquad
\mu=\frac{\Lambda+\lambda}{\Lambda-\lambda}.
\eeq
For the $L^2$ metric, one finds that
\beq\label{Adef}
A=\frac{32\pi\mu[\mu^4-4\mu^2\log\mu-1]}{(\mu^2-1)^3}.
\eeq
It follows from these formulae that $(\rat_1,\gamma)$ has finite diameter and
volume ($16^3\pi^6/6$ according to Baptista \cite{bap}), 
is Ricci positive and has unbounded scalar and holomorphic
sectional curvatures. 
Examining the large $\lambda$ behaviour of $\gamma$,
one finds that the boundary at infinity of $\rat_1$ is
$S^2\times S^2$. 
Geodesic flow in $(\rat_1,\gamma)$ was studied in detail in
\cite{spe1,hasspe}, and 
turns out to be surprisingly complicated given the homogeneity and isotropy
of physical space. In particular, lumps generically
do not travel along great circles in $S^2$.

It is important to realize that {\em any} k\"ahler metric on $\rat_1$
invariant under the $SO(3)\times SO(3)$ action
must have the structure of Proposition \ref{metprop} for some function
$A(\lambda)$. There is one other natural metric on $\rat_1$ which we
will have reason to consider. By identifying a rational map with the
projective equivalence class of its coefficients, we may think of $\rat_1$
as an open subset of $\CP^3$,
\beq
\rat_1\hookrightarrow\CP^3,\qquad
\frac{a_1z+a_2}{a_3z+a_4}\mapsto [a_1,a_2,a_3,a_4].
\eeq
We equip $\CP^3$ with the Fubini-Study metric of constant holomorphic
sectional curvature $4$, whence $\rat_1$ inherits a k\"ahler metric,
which we shall also call the Fubini-Study metric $\gamma_{FS}$. 
It is not hard to show \cite{spe3} that $\gamma_{FS}$ is $SO(3)\times SO(3)$
invariant, and so has the structure of Proposition \ref{metprop}. One
finds that the coefficient function is
\beq
\label{AFS}
A_{FS}=\frac{2\mu}{1+\mu^2} = \frac{1}{2\lambda^2+1}.
\eeq 
Finally, given any $SO(3)\times SO(3)$ invariant k\"ahler metric on
$\rat_1$, it is convenient to define a second metric coefficient
function, the squared length at $(\I_3,(0,0,\lambda))$ of $\theta_3$,
where $\theta_1,\theta_2,\theta_3$ are the left-invariant vector fields on
$SO(3)$ dual to $\sigma_1,\sigma_2,\sigma_3$.
This turns out to be
\beq\label{Bdef}
B=\frac14(\lambda^2+\Lambda^2)A
+\frac14{\lambda\Lambda^2}A'.
\eeq
 For the $L^2$ and Fubini-Study
metrics one finds, respectively,
\bea
B&=&\frac{32\pi\mu^2}{(\mu^2-1)^3}[(\mu^2+1)\log\mu-\mu^2+1],\nonumber \\
B_{FS}&=&\frac{1}{4(\lambda^2+\Lambda^2)^2} = 
\frac{1}{4(2\lambda^2+1)^2}.
\eea

\section{The Laplacian on $\rat_1$}
\label{LRat1}
\news

We begin the computation of the Laplacian on functions on $\rat_1$ by
proving a lemma which generalizes the well-known expression for $\triangle$ in
local coordinates.

\begin{lemma}\label{laplem}
Let $(M^m,g)$ be a Riemannian manifold of dimension $m$, $\{X_i\}$ be a 
local frame on $M$, $\{\nu_i\}$ be the associated coframe,
and $\{\mu_i\}$ be the associated basis for $\Lambda^{m-1}M$, that is,
$$
\mu_1=\nu_2\wedge\nu_3\wedge\cdots\wedge\nu_m,\quad
\mu_2=\nu_1\wedge\nu_3\wedge\cdots\wedge\nu_m,\quad\cdots,\quad
\mu_m=\nu_1\wedge\nu_2\wedge\cdots\wedge\nu_{m-1}.
$$
If all the $(m-1)$-forms $\mu_i$ are closed, then the Laplacian on
functions is
$$
\triangle f=-\frac{1}{\sqrt{|g|}}
\sum_{i,j}X_i[\sqrt{|g|}\wh{g}(\nu_i,\nu_j)X_j[f]],
$$
where $\wh{g}$ is the inverse metric and $|g|=\det(g_{..})$.
\end{lemma}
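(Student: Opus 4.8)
The plan is to start from the divergence formula for the Laplacian and rewrite everything in terms of the given frame. Recall that on functions $\triangle f = -\mathrm{div}(\mathrm{grad}\, f)$, where $\mathrm{grad}\, f$ is the vector field metrically dual to $\d f$. In terms of the coframe, write $\d f = \sum_i X_i[f]\,\nu_i$, so that $\mathrm{grad}\, f = \sum_{i,j}\wh{g}(\nu_i,\nu_j)\,X_i[f]\,X_j$, using that the musical isomorphism sends $\nu_i$ to $\sum_j \wh{g}(\nu_i,\nu_j)X_j$. Thus everything reduces to computing the divergence of a general vector field $Y=\sum_j c_j X_j$ where $c_j = \sum_i \wh{g}(\nu_i,\nu_j)X_i[f]$.

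The key step is to express $\mathrm{div}\, Y$ using the volume form rather than Christoffel symbols, since the hypothesis is stated in terms of the closedness of the $\mu_i$. The divergence is characterized by $\mathcal{L}_Y \vol = (\mathrm{div}\, Y)\,\vol$, equivalently $\d(\iota_Y\vol) = (\mathrm{div}\, Y)\,\vol$, where $\vol = \sqrt{|g|}\,\nu_1\wedge\cdots\wedge\nu_m$. The point of introducing the $\mu_i$ is that $\iota_{X_j}(\nu_1\wedge\cdots\wedge\nu_m) = \pm\mu_j$ (with an explicit sign $(-1)^{j-1}$), so that
$$
\iota_Y\vol = \sqrt{|g|}\sum_j (-1)^{j-1} c_j\,\mu_j.
$$
Now I take the exterior derivative. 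Here is where the hypothesis enters decisively: since each $\mu_j$ is closed, $\d\mu_j=0$, so by the Leibniz rule the only surviving terms come from differentiating the coefficient functions $\sqrt{|g|}\,c_j$. Each such term is $\d(\sqrt{|g|}c_j)\wedge\mu_j$, and because $\mu_j$ omits exactly the factor $\nu_j$, only the $\nu_j$-component of the differential survives the wedge, namely $X_j[\sqrt{|g|}c_j]\,\nu_j\wedge\mu_j = (-1)^{j-1}X_j[\sqrt{|g|}c_j]\,\nu_1\wedge\cdots\wedge\nu_m$. The signs $(-1)^{j-1}$ cancel against those above, giving
$$
\d(\iota_Y\vol) = \Big(\sum_j X_j[\sqrt{|g|}\,c_j]\Big)\nu_1\wedge\cdots\wedge\nu_m
= \frac{1}{\sqrt{|g|}}\Big(\sum_j X_j[\sqrt{|g|}\,c_j]\Big)\vol.
$$
Reading off $\mathrm{div}\, Y$ and substituting $c_j$ then yields the claimed formula, with the overall minus sign coming from $\triangle = -\mathrm{div}\,\mathrm{grad}$.

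The main obstacle is bookkeeping rather than conceptual: tracking the signs $(-1)^{j-1}$ from the interior products and verifying they cancel correctly, and confirming that $\nu_i\wedge\mu_j$ vanishes unless $i=j$. I would therefore state the identity $\iota_{X_j}(\nu_1\wedge\cdots\wedge\nu_m)=(-1)^{j-1}\mu_j$ cleanly at the outset and use $\nu_i\wedge\mu_j=(-1)^{j-1}\delta_{ij}\,\nu_1\wedge\cdots\wedge\nu_m$ as the workhorse. The essential structural insight to flag is that \emph{closedness of the $\mu_i$ is exactly the condition that kills all the terms in which the exterior derivative would hit $\mu_j$}, reducing the computation to differentiating scalar coefficients; this is precisely what makes the formula as clean as the coordinate-frame version, where the analogous forms are automatically closed because $\nu_i=\d q_i$.
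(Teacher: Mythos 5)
Your proof is correct and is essentially the paper's own argument in different notation: the paper computes $\triangle f=-*\mathrm{d}\,(*\,\mathrm{d}f)$ directly, expands $*\nu_i$ via $\epsilon$-symbols (which is exactly your $\sqrt{|g|}\sum_j g^{ij}(-1)^{j-1}\mu_j$), and invokes closedness of the $(m-1)$-fold wedge products at precisely the point where you invoke $\mathrm{d}\mu_j=0$. Since $*\alpha=\iota_{\sharp\alpha}{\rm vol}$ for one-forms, your divergence computation $\mathrm{d}(\iota_{\mathrm{grad}\,f}{\rm vol})$ coincides term for term with the paper's $\mathrm{d}(*\,\mathrm{d}f)$, so the two arguments agree step by step, differing only in that you phrase the computation through $\mathrm{div}$ and interior products rather than the Hodge star.
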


\begin{proof}
We use the summation convention on repeated indices, and define $g^{ij}
=\wh{g}(\nu_i,\nu_j)$. 
The Laplacian on functions is $\triangle=-*{\rm d}*{\rm d}$, where $*$
denotes the Hodge isomorphism, so
\bea
\triangle f&=&-*{\rm d}(X_i[f]*\nu_i)
=-*{\rm d}(X_i[f]\frac{\sqrt{|g|}}{(m-1)!}g^{ij}\epsilon_{ji_2i_3\cdots i_m}
\nu_{i_2}\wedge\cdots\wedge\nu_{i_m})\nonumber\\
&=&-*X_k[X_i[f]\sqrt{|g|}g^{ij}]\frac{\epsilon_{ji_2i_3\cdots i_m}}{(m-1)!}
\nu_k\wedge\nu_{i_2}\wedge\cdots\wedge\nu_{i_m}\nonumber\\
&=&-*X_j[X_i[f]\sqrt{|g|}g^{ij}]\nu_{1}\wedge\nu_{2}\wedge\cdots\wedge\nu_{m},
\nonumber
\eea
where we have used ${\rm d}(\nu_{i_2}\wedge\cdots\wedge\nu_{i_m})=0$. Now the
volume form is ${\rm vol}=\sqrt{|g|}\nu_1\wedge\cdots\wedge\nu_m$ and
$*{\rm vol}=1$, so
$$
\triangle f=-X_j[X_i[f]\sqrt{|g|}g^{ij}]\frac{1}{\sqrt{|g|}}
$$
as claimed.
\end{proof}

Note that any local coordinate basis $X_i=\cd/\cd x^i$ satisfies the
conditions automatically, and that the formula for $\triangle$
reduces to the usual
expression in this case. For our purposes it is convenient to use
the (global) frame $\{\cd_a, \theta_a\: :\: a=1,2,3\}$ on
$\rat_1$, where $\cd_a = \cd/\cd\lambda_a$ and $\theta_a$ are the 
left-invariant vector fields on 
$SO(3)$ dual to $\sigma_a$. 
We thus require an expression for the inverse metric
$\wh{\gamma}$ relative to this frame.
 
\begin{prop}\label{invmetprop}
Let $\gamma$ be an $SO(3)\times SO(3)$ invariant k\"ahler metric
on $\rat_1$ determined, as in Proposition \ref{metprop}, by the function
$A(\lambda$). Then the inverse metric $\wh{\gamma}$ is
\beq
\wh{\gamma}=C_1\cdv\cdot\cdv+C_2(\lamvec\cdot\cdv)^2+
C_3\thetavec\cdot\thetavec+C_4(\lamvec\cdot\thetavec)^2+
C_5\lamvec\cdot(\thetavec\times\cdv)
\eeq
where $C_1,\ldots,C_5$ are smooth functions of $\lambda$ alone satisfying
$$
C_1=\frac{\Lambda^2+\lambda^2}{\Lambda^2A},\quad
C_1+\lambda^2C_2=\frac{\Lambda^2}{4B},\quad
C_3=\frac{4}{\Lambda^2A},\quad
C_3+\lambda^2C_4=\frac{1}{B},\quad
C_5=-C_3.
$$
Here $\cdv=(\cd/\cd\lambda_1,\cd/\cd\lambda_2,\cd/\cd\lambda_3)$,
$\thetavec=(\theta_1,\theta_2,\theta_3)$, juxtaposition of
vector fields denotes symmetrized tensor product, and $B$ is determined by $A$
as in equation (\ref{Bdef}).
\end{prop}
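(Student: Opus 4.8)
The plan is to express the bilinear form $\gamma$ directly in the global frame $\{\cd_a,\theta_a\}$, read off the resulting $6\times6$ symmetric matrix $G$ of frame components, and invert it. Reading Proposition \ref{metprop} off on the dual coframe $\{{\rm d}\lambda_a,\sigma_a\}$, the $\cd\cd$ block is $A_1 I+A_2\lamvec\lamvec^{\mathsf{T}}$, the $\theta\theta$ block is $A_3 I+A_4\lamvec\lamvec^{\mathsf{T}}$, and the mixed $\cd\theta$ block comes entirely from the cross term $A_1\,\lamvec\cdot(\sigvec\times{\rm d}\lamvec)$. The only subtlety here is the symmetrised tensor product: pairing $\sigma_j\,{\rm d}\lambda_k$ with $(\cd_a,\theta_b)$ produces a factor $\tfrac12$, so that $\gamma(\cd_a,\theta_b)=\tfrac12 A_1\,\epsilon_{iba}\lambda_i$. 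I will identify this with $\tfrac12 A_1 K$, where $K$ is the antisymmetric matrix $Kv=\lamvec\times v$, i.e. $K_{ab}=\epsilon_{aib}\lambda_i$. Thus $G$ is the block matrix with the two diagonal blocks above and off-diagonal blocks $\pm\tfrac12 A_1 K$.

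The key structural observation is that all blocks lie in the commutative algebra generated by $I$, $\lamvec\lamvec^{\mathsf{T}}$ and $K$, with relations $K^2=\lamvec\lamvec^{\mathsf{T}}-\lambda^2 I$, $(\lamvec\lamvec^{\mathsf{T}})^2=\lambda^2\lamvec\lamvec^{\mathsf{T}}$ and $K\lamvec\lamvec^{\mathsf{T}}=\lamvec\lamvec^{\mathsf{T}}K=0$. Concretely I will split each $\R^3$ factor into the line $\R\lamvec$ (the longitudinal direction) and its orthogonal plane $\lamvec^{\perp}$: on $\R\lamvec$ one has $K=0$ and $\lamvec\lamvec^{\mathsf{T}}=\lambda^2$, while on $\lamvec^{\perp}$ one has $\lamvec\lamvec^{\mathsf{T}}=0$ and $K^2=-\lambda^2$, so that $K$ restricts to $\lambda$ times a complex structure. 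This block-diagonalises the $6\times6$ inversion into a $2\times2$ longitudinal problem and a transverse problem. Since $G^{-1}$ is again $SO(3)\times SO(3)$ invariant it has the same form as $G$ (equivalently, the algebra is closed under inversion), so I may posit the stated ansatz for $\wh\gamma$ and solve for the $C_i$.

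On the longitudinal line the off-diagonal blocks drop out and the inverse is read off from the diagonal matrix $\mathrm{diag}(A_1+\lambda^2A_2,\;A_3+\lambda^2A_4)$, giving $C_1+\lambda^2C_2=(A_1+\lambda^2A_2)^{-1}$ and $C_3+\lambda^2C_4=(A_3+\lambda^2A_4)^{-1}$. On the transverse plane, writing $K=\lambda J$ with $J^2=-I$ and multiplying the two block matrices, the off-diagonal equation forces $C_5=-C_3$, while the two diagonal equations become $A_1C_1-\tfrac14A_1\lambda^2C_3=1$ and $(A_3-\tfrac14A_1\lambda^2)C_3=1$, which I will solve for $C_3$ and then $C_1$.

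Finally I will convert these into the claimed expressions using three algebraic identities that follow from the definitions of $A_2,A_3,A_4$ in Proposition \ref{metprop} and of $B$ in (\ref{Bdef}): namely $A_1+\lambda^2A_2=4B/\Lambda^2$, $A_3+\lambda^2A_4=B$, and $A_3-\tfrac14\lambda^2A_1=\tfrac14\Lambda^2A$. These yield $C_3=4/(\Lambda^2A)$, $C_1=(\Lambda^2+\lambda^2)/(\Lambda^2A)$, $C_1+\lambda^2C_2=\Lambda^2/(4B)$ and $C_3+\lambda^2C_4=1/B$, as stated. The main obstacle is purely organisational: getting the mixed block (the Levi-Civita cross term, together with its factor $\tfrac12$) correct and recognising it as the cross-product matrix $K$. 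Once the problem is phrased inside the $\{I,\lamvec\lamvec^{\mathsf{T}},K\}$ algebra and split into longitudinal and transverse parts, what remains is a pair of small linear systems.
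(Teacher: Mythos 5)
Your proof is correct, and it takes a genuinely different route from the paper's. The paper proves the proposition in two moves: first it invokes the same symmetry/classification argument used for Proposition \ref{metprop} (Proposition 3.1 of \cite{spe3}) to conclude that the invariant symmetric $(2,0)$-tensor $\wh\gamma$ must have the stated five-coefficient form, and then it extracts $C_1,\ldots,C_5$ pointwise by computing $\|{\rm d}\lambda_a\|^2$, $\|\sigma_a\|^2$ and $\langle{\rm d}\lambda_a,\sigma_b\rangle$ at $(\I_3,(0,0,\lambda))$ using the unitary frame $\{e_a,Je_a\}$ of \cite{spe3} (e.g.\ $\|\sigma_3\|^2=C_3+\lambda^2C_4=\sigma_3(Je_3)^2=1/B$). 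You instead invert the $6\times 6$ frame matrix directly inside the commutative algebra generated by $\I$, $\lamvec\lamvec^{\mathsf{T}}$ and $K$, splitting into longitudinal and transverse parts; this is self-contained (no appeal to the unitary frame or to the classification of invariant tensors, since exhibiting any $\wh\gamma$ of the stated form with $\gamma\wh\gamma=\I$ suffices by uniqueness of the inverse), at the cost of more bookkeeping: the symmetrization factor $\tfrac12$ in the cross terms, which you handle correctly, and the closing identities $A_1+\lambda^2A_2=4B/\Lambda^2$, $A_3+\lambda^2A_4=B$, $A_3-\tfrac14\lambda^2A_1=\tfrac14\Lambda^2A$, all of which check out against Proposition \ref{metprop} and (\ref{Bdef}). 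The paper's route buys brevity and reuses machinery from \cite{spe3} where the Kähler structure does the work; yours makes transparent that only the algebraic form of the metric and the $A_i$, $B$ relations are needed. One small tidy-up: the transverse part of $\gamma\wh\gamma=\I$ has two off-diagonal block equations, and you solve using only one ($C_5=-C_3$); the other, $A_1C_1+A_3C_5=0$, is then automatically satisfied by your solution (both $A_1C_1$ and $A_3C_3$ equal $(\Lambda^2+\lambda^2)/\Lambda^2$), but it should be stated for completeness. A further sanity check in your favour: your block matrix has $\det G=\tfrac14\Lambda^2A^4B^2$, whose square root reproduces the volume form (\ref{vol}), confirming the $\tfrac12$ symmetrization convention is the one the paper intends.
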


\begin{proof}
The same symmetry argument used to prove Proposition \ref{metprop}
(Proposition 3.1 of \cite{spe3}) shows that every invariant symmetric
$(2,0)$ tensor is of the above form, with $C_1,\ldots,C_5$ some functions of 
$\lambda$ only. The formulae for $C_1,\ldots,C_5$ result from
explicit computation of $\|{\rm d}\lambda_a\|^2$, $\|\sigma_a\|^2$ and
$\langle {\rm d}\lambda_a,\sigma_b\rangle$
$a,b=1,2,3$, at the
point $(\I_3,(0,0,\lambda))$ for general $\lambda$, using the unitary
frame $\{e_a,Je_a\: :\: a=1,2,3\}$ introduced in section 4.1 of \cite{spe3}.
For example, at $(\I_3,(0,0,\lambda))$,
$$
\|\sigma_3\|^2=C_3+\lambda^2 C_4=\sum_{a=1}^3(\sigma_3(e_a)^2+
\sigma_3(Je_a)^2)=\sigma_3(Je_3)^2=\frac{1}{B}.
$$
\end{proof}

\begin{prop}\label{lapprop}
Let $\gamma$ be an $SO(3)\times SO(3)$ invariant k\"ahler metric
on $\rat_1$ determined as in Proposition \ref{metprop} by the function 
$A(\lambda)$. Then the Laplacian on $(\rat_1,\gamma)$ is
\bea
\triangle f&=&-\frac{4}{\Lambda^2A}\left\{\thetavec\cdot\thetavec f+
\vc{\lambda}\cdot(\cdv\times\thetavec)f-\frac{1}{\lambda^2}\left[1-
\frac{\Lambda^2A}{4B}\right](\lamvec\cdot\thetavec)^2f\right\}\nonumber \\
&&-\frac{1}{\Lambda\lambda^2A^2B}\frac{\cd\:}{\cd\lambda}\left(
\frac{\lambda^2\Lambda^3A^2}{4}\frac{\cd f}{\cd\lambda}\right)-
\frac{\Lambda^2+\lambda^2}{\lambda^2\Lambda^2A}(\lamvec\times\cdv)\cdot
(\lamvec\times\cdv)f.\label{lap}
\eea
\end{prop}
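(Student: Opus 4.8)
The plan is to feed the global frame $\{\cd_a,\theta_a\}$ and the inverse metric of Proposition \ref{invmetprop} into Lemma \ref{laplem}. First I would verify the hypothesis of the lemma: that each of the six $5$-forms $\mu_i$ built from the coframe $\{d\lambda_a,\sigma_a\}$ is closed. Since $d(d\lambda_a)=0$ and the Maurer--Cartan equations express each $d\sigma_a$ as a linear combination of the two-forms $\sigma_b\wedge\sigma_c$, every term produced by the Leibniz rule when differentiating a $\mu_i$ wedges some $\sigma_b\wedge\sigma_c$ against a product that already contains $\sigma_b$ or $\sigma_c$, and so vanishes; hence $d\mu_i=0$ and the lemma gives $\triangle f=-\tfrac{1}{\sqrt{|g|}}\sum_{i,j}X_i[\sqrt{|g|}\,\wh\gamma(\nu_i,\nu_j)X_j[f]]$.

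Next I would compute $\sqrt{|g|}$. Evaluating the metric of Proposition \ref{metprop} at the representative point $(\I_3,(0,0,\lambda))$, the Gram matrix in the frame $\{\cd_a,\theta_a\}$ splits into two identical $2\times2$ blocks (coupling $d\lambda_1$ with $\sigma_2$ and $d\lambda_2$ with $\sigma_1$) together with the decoupled diagonal entries $g_{33}=A_1+\lambda^2A_2$ and $g_{66}=A_3+\lambda^2A_4$. Using the relations of Proposition \ref{metprop} one finds each block determinant equals $\tfrac14A^2\Lambda^2$, while $g_{33}=4B/\Lambda^2$ and $g_{66}=B$, so $|g|=\tfrac14A^4\Lambda^2B^2$ and $\sqrt{|g|}=\tfrac12A^2\Lambda B$. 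A short isometry argument---any $(R,\lamvec)$ is carried to $(\I_3,(0,0,\lambda))$ by a single element of $G$, whose induced action on the frame $\{\cd_a,\theta_a\}$ is by $SO(3)$ rotations and therefore leaves $\det\gamma$ invariant---shows $|g|$ is a function of $\lambda$ alone, so this value holds throughout $\rat_1$.

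With $\sqrt{|g|}$ known, I would substitute $\wh\gamma$ from Proposition \ref{invmetprop} and organize the outcome by its five tensor pieces. The $C_3\thetavec\cdot\thetavec$ and $C_4(\lamvec\cdot\thetavec)^2$ pieces are immediate, since $\theta_a$ annihilates every function of $\lambda$ and every $\lambda_b$: they give $-C_3\thetavec\cdot\thetavec f$ and $-C_4(\lamvec\cdot\thetavec)^2f$, and $C_3=4/(\Lambda^2A)$ together with $C_3+\lambda^2C_4=1/B$ reproduces the first and third terms in the brace of (\ref{lap}). In the cross piece $C_5\lamvec\cdot(\thetavec\times\cdv)$, the terms where $\cd_c$ falls on $\sqrt{|g|}\,C_5\lambda_a$ vanish by the antisymmetry of $\epsilon_{abc}$, and $\theta_b$ commutes with $\cd_c$ (they act on different factors of $SO(3)\times\R^3$); with $C_5=-C_3$ this collapses to $-\tfrac{4}{\Lambda^2A}\lamvec\cdot(\cdv\times\thetavec)f$, the middle term of the brace. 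For $C_1\cdv\cdot\cdv+C_2(\lamvec\cdot\cdv)^2$ I would use $\cdv\cdot\cdv=\tfrac{1}{\lambda^2}\cd_\lambda(\lambda^2\cd_\lambda)+\tfrac{1}{\lambda^2}(\lamvec\times\cdv)\cdot(\lamvec\times\cdv)$ and $\lamvec\cdot\cdv=\lambda\cd_\lambda$; since $\sqrt{|g|}$ and $C_1$ are radial, the angular part passes straight through to give $-\tfrac{C_1}{\lambda^2}(\lamvec\times\cdv)\cdot(\lamvec\times\cdv)f$, which matches the final term of (\ref{lap}) via $C_1=(\Lambda^2+\lambda^2)/(\Lambda^2A)$.

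The main obstacle is the residual radial computation: the $\cd_\lambda$ and $\cd_\lambda^2$ contributions from the $C_1$ and $C_2$ pieces must be shown to reassemble into the single Sturm--Liouville operator $-\tfrac{1}{\Lambda\lambda^2A^2B}\cd_\lambda\!\big(\tfrac{\lambda^2\Lambda^3A^2}{4}\cd_\lambda f\big)$. The $\cd_\lambda^2f$ coefficient matches immediately, since $C_1+\lambda^2C_2=\Lambda^2/(4B)$; the first-order coefficient requires differentiating $\sqrt{|g|}$, $C_1$ and $C_2$ and invoking the explicit relation $B=\tfrac14(\lambda^2+\Lambda^2)A+\tfrac14\lambda\Lambda^2A'$ of (\ref{Bdef}) to effect the cancellations. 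This step is purely computational, but it is where all the structure functions and their $\lambda$-derivatives must be reconciled; everything else is essentially bookkeeping.
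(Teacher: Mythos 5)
Your proposal is correct and takes essentially the same approach as the paper's proof: verify the closedness hypothesis of Lemma \ref{laplem} for the frame $\{\cd_a,\theta_a\}$, substitute the inverse metric of Proposition \ref{invmetprop} together with $\sqrt{|\gamma|}=\tfrac{\Lambda}{2}A^2B$, and collect terms piece by piece. The only differences are minor: you derive $\sqrt{|\gamma|}$ from the Gram matrix at $(\I_3,(0,0,\lambda))$ plus a $G$-invariance argument where the paper simply quotes the volume form (\ref{vol}) from \cite{spe3}, and the radial reassembly you flag as the main obstacle is in fact automatic---the divergence form of Lemma \ref{laplem} yields the Sturm--Liouville term identically once one notes $\sqrt{|\gamma|}\,(C_1+\lambda^2 C_2)=\Lambda^3A^2/8$, with no further appeal to (\ref{Bdef}).
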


\begin{proof} 
We work with the frame $\{\cd_a,\theta_a\: :\: a=1,2,3\}$.
Note that ${\rm d}({\rm d}\lambda_a)=0$ and
$$
{\rm d}\sigma_1=\sigma_2\wedge\sigma_3,\quad
{\rm d}\sigma_2=\sigma_3\wedge\sigma_1,\quad
{\rm d}\sigma_3=\sigma_1\wedge\sigma_2,
$$
so any wedge product of five of these one forms is closed. Hence
this frame satisfies the conditions of Lemma \ref{laplem} which, with
Proposition \ref{invmetprop}, immediately gives
\bea
\triangle f&=&-\frac{1}{\sqrt{|\gamma|}}\left\{\cdv\cdot(\sqrt{|\gamma|}(C_1
\cdv f+C_2\lamvec(\lamvec\cdot\cdv[f])))+
\cdv\cdot(-\sqrt{|\gamma|}\frac{C_5}{2}\lamvec\times\thetavec[f])
\right.\nonumber \\
&&\label{huh1}\left.+
\thetavec \cdot (\sqrt{|\gamma|}\frac{C_5}{2}\lamvec\times\cdv[f])+
\thetavec \cdot (\sqrt{|\gamma|}(C_3\thetavec[f] +
C_4\lamvec(\lamvec\cdot\thetavec[f])))\right\}.  
\eea
Now the volume form on $(\rat_1,\gamma)$ is \cite{spe3}
\beq
\label{vol}
{\rm vol}=\frac{\Lambda}{2}A^2B
{\rm d}\lambda_1\wedge {\rm d}\lambda_2\wedge {\rm d}\lambda_3\wedge
\sigma_1\wedge\sigma_2\wedge\sigma_3,
\eeq
whence one sees that $\sqrt{|\gamma|}=\frac{\Lambda}{2}A^2B$ for this frame.
Substituting this, and the formulae for $C_1,\ldots,C_5$ (Proposition 
\ref{invmetprop}) into equation (\ref{huh1}) yields, after some 
straightforward manipulation, the formula claimed.
\end{proof}

As a check on our formula, we should verify that it is 
consistent with the $SO(3)\times SO(3)$ symmetry of $\gamma$. That is,
the operator $\triangle$ must commute with all
Killing vector fields on $(\rat_1,\gamma)$. There are six
independent Killing vector fields on $(\rat_1,\gamma)$,
three generating the left $SO(3)$ action and three
 generating the right $SO(3)$
action. Recall that the left action on $\rat_1\cong SO(3)\times\R^3$
acts by left translation on $SO(3)$ and acts trivially on $\R^3$. Now
left translations on a Lie group are generated by {\em right} invariant
vector fields. So, let $\xi_a$, $a=1,2,3$, be the right invariant vector
fields on $SO(3)$ coinciding at $\I_3$ with $\theta_a$. Since the left
and right actions commute, $[\theta_a,\xi_b]=0$ for all $a,b$ so we
see immediately that $\triangle$ as in formula (\ref{lap})
commutes with each $\xi_a$.
The right action of $SO(3)$ on $\rat_1\cong SO(3)\times\R^3$
acts by right translation on $SO(3)$ and by the fundamental representation
on $\R^3$. Hence the Killing vector fields generating this action are
$X_a=\theta_a+\Phi_a$ where $\Phi_a=\epsilon_{abc}\lambda_b\cd_c$.
Note that $\{\theta_a\}$, $\{\Phi_a\}$ and $\{X_a\}$ each satisfy the
angular momentum algebra, 
\beq
[\theta_a,\theta_b]=-\epsilon_{abc}\theta_c,\quad
[\Phi_a,\Phi_b]=-\epsilon_{abc}\Phi_c,\quad
[X_a,X_b]=-\epsilon_{abc}X_c,
\eeq
and that $[\theta_a,\Phi_b]=0$,
so 
\beq
[X_a,\Xvec\cdot\Xvec]=
[X_a,\thetavec\cdot\thetavec]=
[X_a,\vc{\Phi}\cdot\vc{\Phi}]=0.
\eeq
Clearly all these
vector fields annihilate functions of $\lambda$ and commute with 
$\cd/\cd\lambda$.
It follows that $X_a$ commutes with the first, fourth and fifth terms of
(\ref{lap}). To deal with the second term, note that
\beq
[X_a,\lamvec\cdot(\cdv\times\thetavec)]=
[X_a,\thetavec\cdot\vc{\Phi}]=
\frac12[X_a,\Xvec\cdot\Xvec-\thetavec\cdot\thetavec-\vc{\Phi}\cdot\vc{\Phi}]
=0.
\eeq
Finally, $X_a$ commutes with the third term since
\beq
\label{Xcommutes}
[X_a,\lamvec\cdot\thetavec]=
[\theta_a+\epsilon_{abc}\lambda_b\cd_c,\lambda_d\theta_d]=
\lambda_d(-\epsilon_{ade}\theta_e)+\epsilon_{abd}\lambda_b\theta_d=0.
\eeq

\section{Casimir energy}
\news

The dominant term in Moss and Shiiki's Hamiltonian arising from the 
degrees of freedom orthogonal to the moduli space is the Casimir energy 
$\Cas,$ which is formally given by
\beq
\label{Cformal}
\Cas=\frac12\sum_i\omega_i,
\eeq
where $\omega_i$ are the frequencies of oscillation of the normal modes
of the static solution. In the field theory 
context, the above sum is infinite and divergent, so we must regularize it 
in some way. We return
to this issue below. First we set about computing the frequencies $\omega_i$
in the case of interest, a single $\CP^1$ lump on $S^2$. To achieve this,
we will make use of some standard results in the stability theory of harmonic
maps, which we begin by briefly reviewing. This material is treated in detail
in \cite[ch.\ 5]{ura}.

\subsection{The spectrum of the Jacobi operator}

Recall that given a map $\vphi:(M,g)\ra (N,h)$ between Riemannian manifolds,
its Dirichlet energy is 
\beq
E=\frac12\int_M\|\d\vphi\|^2,
\eeq
 and the map is 
harmonic if it is a critical point of $E$. This is directly relevant to us 
since, with the choice $(M,g)=(N,h)=$ the round sphere of radius $1$,
$E$ coincides precisely with $V$, the $\CP^1$ model's potential energy 
functional. Hence, $\CP^1$ lumps are harmonic maps $S^2\ra S^2$. Given a
harmonic map $\vphi:M\ra N$, one defines its Hessian,
a symmetric bilinear form on
$\Gamma(\vphi^{-1}TN)$ (the space of smooth sections of the vector bundle
over $M$ whose fibre over $p\in M$ is the tangent space $T_{\vphi(p)}N$), 
as follows. Let $\vphi_{s,t}:M\ra N$ be
a smooth two-parameter variation of $\vphi$ (so $\vphi_{0,0}=\vphi$), with
$\cd_s\vphi_{s,t}|_{s=t=0}=X$, 
$\cd_t\vphi_{s,t}|_{s=t=0}=Y\in\Gamma(\vphi^{-1}TN)$. Then
\beq
\hess_\vphi(X,Y)=\left.\frac{\cd^2\:\:}{\cd s\, \cd
  t}E(\vphi_{s,t})\right|_{s=t=0}. 
\eeq
One says that $\vphi$ is {\em stable} if $\hess_\vphi(X,X)\geq 0$ for
all $X$.  
Clearly, if $\vphi$ minimizes $E$ in its homotopy class, as in our case, it
is stable.

Associated with $\hess_\vphi$, there is a self-adjoint, elliptic, linear 
differential operator 
$\jac_\vphi:\Gamma(\vphi^{-1}TN)\ra\Gamma(\vphi^{-1}TN)$, which is known 
as the Jacobi operator and is defined by
\beq
\hess_\vphi(X,Y)=\ip{X,\jac_\vphi Y}_{L^2}=\int_Mh(X,\jac_\vphi Y).
\eeq
If $M$ is compact,
the harmonic map $\vphi$ is stable if and only if the spectrum of $\jac_\vphi$
is non-negative, and this spectrum is discrete, each eigenvalue having finite 
multiplicity. Any map $\psi:M\ra N$ which is sufficiently close pointwise
to a harmonic map $\vphi$
can be uniquely written $\psi=\exp_{\vphi}X$, where $\|X\|$ is
pointwise small and $\exp_\vphi$ denotes the geodesic exponential
map based at $\vphi$ (explicitly, given $X\in\Gamma(\vphi^{-1}TN)$, $\exp_\vphi X$ is the map $M\ra N$ which sends each $p\in M$ to 
$y(1)$, where $y(t)$ is the geodesic in $N$ with initial data $y(0)=\vphi(p)$, $\dot{y}(0)=X(p)$). Then
\beq
E(\psi)=E(\vphi)+\frac12\hess_\vphi(X,X)+O(X^3)=
E(\vphi)+\frac12\ip{X,\jac_\vphi X}_{L^2}+O(X^3),
\eeq
whence it is clear that the eigenvalues of $\jac_\vphi$ are precisely
$\omega_i^2$, the squared frequencies we require to compute $\Cas$.

There is an explicit formula for the Jacobi operator of a general harmonic
map $\vphi:(M,g)\ra (N,h)$,
\beq
\jac_\vphi=\triangle_\vphi-\rr_\vphi,
\eeq
where $\triangle_\vphi$ is the rough Laplacian on $\Gamma(\vphi^{-1}TN)$, and
$\rr_\vphi$ is a certain section of ${\rm End}(\vphi^{-1}TN)$ constructed
from
the curvature tensor $R^N$ on $N$. Explicitly, given a choice of local
orthonormal frame $E_1,E_2,\ldots,E_m$ on $M$,
\bea
\triangle_\vphi Y&=&-\tr\nabla^\vphi\nabla^\vphi Y=-
\sum_{i=1}^m\left\{\nabla_{E_i}^\vphi(\nabla_{E_i}^\vphi Y)
-\nabla^\vphi_{\nabla^M_{E_i}E_i}Y\right\},
\\
\rr_\vphi Y&=&\sum_{i=1}^mR^N(Y,\d\vphi E_i)\d\vphi E_i,
\eea
where $\nabla^M,\nabla^N$ are the Levi-Civita connexions of $M,N$
respectively, and $\nabla^\vphi$ is the pullback to $\vphi^{-1}TN$ of
$\nabla^N$. In the case of interest to us, $\rr_\vphi$ is somewhat easier
to handle than $\triangle_\vphi$, owing to the following proposition:

\begin{prop}\label{lidl}
Let $\vphi:(M^n,g)\ra (N^n,h)$ be a weakly conformal
mapping between Riemannian manifolds of equal dimension, and
$(N^n,h)$ be Einstein with scalar curvature $\kappa$ (necessarily
constant). Then
$$
\rr_\vphi=\frac{2\kappa}{n^2}\en\, \id,
$$
where $\en=\frac12\|\d\vphi\|^2\in C^\infty(M)$ is the Dirichlet
energy density of $\vphi$.
\end{prop}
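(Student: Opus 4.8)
The plan is to compute $\rr_\vphi$ pointwise, using weak conformality to turn the contraction over an $M$-frame that defines $\rr_\vphi$ into a contraction over an $N$-frame, which then collapses into the Ricci operator of $N$. Fix $p\in M$ and an orthonormal frame $\{E_i\}_{i=1}^n$ of $T_pM$. Weak conformality means $\vphi^*h=f^2g$ for some conformal factor $f\geq0$, i.e.\ $h(\d\vphi E_i,\d\vphi E_j)=f^2\delta_{ij}$. First I would relate $f$ to the energy density: since $\en=\tfrac12\sum_i h(\d\vphi E_i,\d\vphi E_i)=\tfrac12 nf^2$, one has $f^2=2\en/n$.

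At any point where $f>0$ the vectors $\epsilon_i:=f^{-1}\d\vphi E_i$ are orthonormal, and here the hypothesis $\dim N=\dim M=n$ is essential: it guarantees that $\{\epsilon_i\}_{i=1}^n$ is a \emph{full} orthonormal frame of $T_{\vphi(p)}N$. Substituting $\d\vphi E_i=f\epsilon_i$ into the defining formula for $\rr_\vphi$ then gives
\beq
\rr_\vphi Y=\sum_{i=1}^n R^N(Y,f\epsilon_i)f\epsilon_i
=f^2\sum_{i=1}^n R^N(Y,\epsilon_i)\epsilon_i.
\eeq

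The heart of the argument is to identify $\sum_i R^N(Y,\epsilon_i)\epsilon_i$ with the Ricci endomorphism of $N$ applied to $Y$. Using the pair symmetry $h(R^N(X,Y)Z,W)=h(R^N(Z,W)X,Y)$ of the curvature tensor together with the standard Ricci contraction $\ric^N(Y,Z)=\sum_i h(R^N(\epsilon_i,Y)Z,\epsilon_i)$ over the orthonormal frame $\{\epsilon_i\}$, one checks that $h(\sum_i R^N(Y,\epsilon_i)\epsilon_i,Z)=\ric^N(Y,Z)$ for every $Z$, so $\sum_i R^N(Y,\epsilon_i)\epsilon_i$ is exactly the Ricci endomorphism $\ric^{N\sharp}Y$. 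The Einstein condition, in the normalisation $\ric^N=\tfrac{\kappa}{n}h$ forced by $\tr_h\ric^N=\kappa$, makes this endomorphism $\tfrac{\kappa}{n}\id$. Hence $\rr_\vphi Y=f^2\tfrac{\kappa}{n}Y=\tfrac{2\en}{n}\cdot\tfrac{\kappa}{n}Y$, which is the claimed $\rr_\vphi=\tfrac{2\kappa}{n^2}\en\,\id$. At points where $f=0$ both sides vanish (every term of $\rr_\vphi$ contains a factor $\d\vphi E_i=0$, and $\en=0$), so the identity extends over all of $M$ by continuity.

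The step I expect to demand the most care is the sign and normalisation bookkeeping in identifying the frame sum with the Ricci operator: the convention for $R^N(X,Y)Z$, the Ricci contraction, and the Einstein normalisation $\ric^N=\tfrac{\kappa}{n}h$ must all be fixed consistently so that the constant emerges as $\tfrac{2\kappa}{n^2}$ with the correct sign. The only genuinely geometric input is weak conformality itself, which is exactly what allows the pushed-forward frame $\{\d\vphi E_i\}$ to be an orthogonal, equal-length rescaling of an orthonormal $N$-frame; without it the contraction would not reduce to a pure trace of the curvature and no multiple of the identity would result.
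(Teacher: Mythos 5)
Your proposal is correct and follows essentially the same route as the paper's proof: both use weak conformality and the equal-dimension hypothesis to rescale the pushed-forward frame $\{\d\vphi E_i\}$ into a full orthonormal frame of $T_{\vphi(p)}N$ with factor $f^2=2\en/n$, then contract via a standard curvature symmetry to recognize the Ricci tensor and apply the Einstein condition $\ric=\tfrac{\kappa}{n}h$, treating the $f=0$ points trivially. The only difference is cosmetic (which curvature symmetry is invoked to produce the Ricci contraction), so no further comment is needed.
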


\begin{proof} Recall that $\vphi$ is weakly conformal
if there exists a smooth function $f:M\ra\R$ such that
$h(\d\vphi X,\d\vphi Y)=f^2g(X,Y)$ for all vector fields $X,Y$ on $M$.
Let $E_1,\ldots,E_n$ be a local orthonormal frame on $M$.
At all points $p\in M$ where
$f(p)=0$, $\d\vphi=0$ so the desired equality holds trivially
(both $\rr_\vphi$ and $\en$ vanish). At all other points,
$\|\d\vphi E_i\|^2=\frac2n\en>0$ independent of $i$ and $h(\d\vphi E_i,
\d\vphi E_j)=0$ for $i\neq j$, so 
$$
\hat{E_i}=\sqrt{\frac{n}{2\en}}\d\vphi E_i,\qquad i=1,2,\ldots,n
$$
form an orthonormal basis for $T_{\vphi(p)}N$. Hence,
for all $X,Y\in T_{\vphi(p)}N$,
\bea
h(X,\rr_\vphi Y)&=&\sum_{i=1}^nh(X,R^N(Y, \d\vphi E_i)\d\vphi E_i)
=\frac{2\en}{n}\sum_{i=1}^n h(X,R^N(Y, \hat{E}_i)\hat{E}_i)\nonumber \\
&=&
-\frac{2\en}{n}\sum_{i=1}^n h(\hat{E}_i,R^N(Y,\hat{E}_i)X) = 
\frac{2\en}{n}\ric(Y,X)
= \frac{2\en}{n}\frac{\kappa}{n} h(X,Y)
\eea
by a standard symmetry of $R^N$ \cite[p.\ 58]{wil} and 
the Einstein property of $(N,h)$. \end{proof}

\noindent Note that every $\vphi\in\rat_1$ is holomorphic, hence
conformal, and the unit two-sphere is Einstein with $\kappa=2$. Hence
\beq
\rr_\vphi= \en\, \id
\eeq
in this case. Thinking of $\jac_\vphi$ as a quantum Hamiltonian acting
on ``wavefunctions'' on the two-sphere (sections of $\vphi^{-1}TN$, really),
the effect of the curvature term is to add a potential well equal to
minus the classical energy density of the lump. What is this energy density?
By $SO(3)\times SO(3)$ invariance, the spectrum of $\jac_\vphi$
can depend only on $\lambda$, so it suffices to consider only the 
one-parameter family of rational maps 
\beq\label{aldi}
\vphi:z\mapsto W=\mu(\lambda)z=\frac{\Lambda +\lambda}{\Lambda-\lambda}z
\eeq
corresponding to the curve $(\I_3,(0,0,\lambda))$, $\lambda\geq 0$,
in $SO(3)\times \R^3$. It is convenient to parametrize these by
$\mu\in[1,\infty)$ rather than $\lambda\in[0,\infty)$. In terms
of the usual polar coordinates on both domain and codomain spheres,
the map (\ref{aldi}) is
\beq
\vphi:(\theta,\phi)\mapsto (f_\mu(\theta),\phi),\qquad\mbox{where}\quad
f_\mu(\theta)=2\cot^{-1}\left(\mu\cot\frac\theta2\right).
\eeq
The pair $E_1=\cd/\cd\theta$, $E_2=\cosec\theta\cd/\cd\phi$ is orthonormal
on $S^2$, whence, by conformality, one sees that
\beq
\en=\|\d\vphi E_2\|^2=\frac{\sin^2 f_\mu(\theta)}{\sin^2\theta}.
\eeq

We turn now to the explicit computation of the rough Laplacian
$\triangle_\vphi$, for $\vphi$ of the form (\ref{aldi}). Note that $\wt{E}_1=
E_1\circ\vphi$, $\wt{E}_2=E_2\circ\vphi$ gives an orthonormal pair
of sections of $\vphi^{-1}TN$. Any section $Y$ of $\vphi^{-1}TN$ can
be uniquely written $Y=Y_1(\theta,\phi)\wt{E}_1+Y_2(\theta,\phi)\wt{E}_2$.
We seek an expression for $\triangle_\vphi$ as a differential operator acting
on the pair of smooth functions $(Y_1,Y_2)$. In fact, we can deduce all
we need once we know how $\triangle_\vphi$ acts on sections of
the form $a(\theta)\cos m\phi \wt{E}_1$ where $m\in\N$. 
A straightforward but lengthy
computation, presented in the appendix, shows that
\bea\label{raaba}
\triangle_\vphi(a(\theta)\cos m\phi\wt{E}_1)&=&(\dd_m a)\cos m\phi\wt{E}_1
+Q_ma\sin m\phi\wt{E}_2\\
\mbox{where}\qquad
\dd_m&=&-\frac{d^2\:}{d\theta^2}-\cot\theta\frac{d\:}{d\theta}+
\frac{m^2+\cos^2f_\mu}{\sin^2\theta}\\
Q_m&=&2m\frac{\cos f_\mu}{\sin^2\theta}.
\eea
Since $N$ is k\"ahler, $\nabla^N$ commutes with $J^N$, the almost complex 
structure on $N$, namely $[\nabla^\vphi,J^N]=0,$ and hence 
$[\triangle_\vphi,J^N]=0$ also. So
we conclude immediately that
\bea
\triangle_\vphi(a(\theta)\cos m\phi \wt{E}_2)&=&
\triangle_\vphi(J^Na(\theta)\cos m\phi \wt{E}_1)=
J^N\triangle_\vphi(a(\theta)\cos m\phi \wt{E}_1)\nonumber \\ &=&
-Q_ma\sin m\phi\wt{E}_1+(\dd_m a)\cos m\phi\wt{E}_2.
\eea
Consider now the $SO(2)$ action on $C^\infty(S^2,S^2)$ given by
\beq
\vphi\mapsto\vphi_\alpha=R_\alpha\circ\vphi\circ R_{-\alpha}\qquad\mbox{where}
\qquad
R_\alpha=\left(\begin{array}{ccc}\cos\alpha&-\sin\alpha&0\\
\sin\alpha&\cos\alpha&0\\
0&0&1\end{array}\right),\quad\alpha\in\R.
\eeq
Note that $E(\vphi_\alpha)\equiv E(\vphi)$ and each of the maps (\ref{aldi})
is fixed under this $SO(2)$ action. It follows that $\jac_\vphi$, and hence,
by Proposition \ref{lidl}, $\triangle_\vphi$, commute with the induced action
on $\Gamma(\vphi^{-1}TN)$,
\beq
Y\mapsto \rot_\alpha(Y)=\d R_\alpha\circ Y\circ R_{-\alpha},
\eeq
or, in terms of polar coordinates,
\beq
\rot_\alpha(Y_1(\theta,\phi)\wt{E}_1+Y_2(\theta,\phi)\wt{E}_2)=
Y_1(\theta,\phi-\alpha)\wt{E}_1+Y_2(\theta,\phi-\alpha)\wt{E}_2.
\eeq
Hence
\bea
\triangle_\vphi(a(\theta)\sin m\phi\wt{E}_2)&=&
\triangle_\vphi(\rot_{\frac{\pi}{2m}}a(\theta)\cos m\phi \wt{E}_2)=
\rot_{\frac{\pi}{2m}}\triangle_\vphi(a(\theta)\cos m\phi
\wt{E}_2)\nonumber \\ &=& 
\rot_{\frac{\pi}{2m}}(-Q_ma\sin m\phi\wt{E}_1+(\dd_m a)\cos
m\phi\wt{E}_2)\nonumber \\ &=& 
Q_ma\cos m\phi\wt{E}_1+(\dd_m a)\sin m\phi\wt{E}_2,
\eea
and similarly
\beq
\triangle_\vphi(a(\theta)\sin m\phi\wt{E}_1)=(\dd_m a)\sin m\phi\wt{E}_1-
Q_m a\cos m\phi\wt{E}_2.
\eeq

To summarize, $\jac_\vphi$ leaves each infinite dimensional subspace
$I_m\subset\Gamma(\vphi^{-1}TN)$, $m\in\N$,
\beq
I_m=\{a_1(\theta)\cos m\phi\wt{E}_1+
a_2(\theta)\sin m\phi\wt{E}_2+
a_3(\theta)\cos m\phi\wt{E}_2+
a_4(\theta)\sin m\phi\wt{E}_1\: :\: a\in C^\infty((0,\pi),\R^4)\}
\eeq
invariant, these spaces span $\Gamma(\vphi^{-1}TN)$, and on $I_m$,
\beq
\jac_\vphi\left(\begin{array}{c}a_1\\a_2\\a_3\\a_4\end{array}\right)=
\left(\begin{array}{cccc}
\dd_m-\en& Q_m&0&0\\
Q_m&\dd_m -\en&0&0\\
0&0&\dd_m-\en&-Q_m\\
0&0&-Q_m&\dd_m-\en\end{array}\right)
\left(\begin{array}{c}a_1\\a_2\\a_3\\a_4\end{array}\right).
\eeq
This diagonalizes after a simple change of coordinates. Let
$\alpha_1=a_1+a_2$, 
$\alpha_2=a_3+a_4$, $\alpha_3=a_1-a_2$ and $\alpha_4=a_3-a_4$. Then
\beq
\jac_\vphi\left(
\begin{array}{c}\alpha_1\\\alpha_2\\\alpha_3\\\alpha_4\end{array}
\right)=
\left(\begin{array}{cccc}
\dd_m+Q_m-\en&0&0&0\\
0&\dd_m+Q_m-\en&0&0\\
0&0&\dd_m-Q_m-\en&0\\
0&0&0&\dd_m-Q_m-\en\end{array}\right)
\left(\begin{array}{c}\alpha_1\\\alpha_2\\\alpha_3\\\alpha_4
\end{array}\right). 
\eeq
Note that $\dd_{-m}=\dd_m$ and $Q_{-m}=-Q_m$, so $\spec\jac_\vphi$ is
the union of the spectra of the Sturm-Liouville operators
\beq\label{sl}
S_m=\dd_m+Q_m-\en=-\frac{d^2\:}{d\theta^2}-\cot\theta\frac{d\:}{d\theta}
+U_m(\theta),\qquad m\in\Z,
\eeq
where
\beq
U_m(\theta)=\frac{m^2-1+2\cos^2f_\mu(\theta)+
2m\cos f_\mu(\theta)}{\sin^2\theta}
\eeq
and each eigenvalue occurs with double multiplicity. 
In terms of physics, we may think of
this as the energy spectrum for ``spin 0'' (i.e.\ $\phi$ independent)
states of a point particle moving on $S^2$ in the $SO(2)$ invariant
potential well $U_m(\theta)$. The spectrum of each $S_m$ may be computed
numerically using the shooting method described in \cite{hasspe}.

There is one value of $\mu$ for which $\spec\jac_\vphi$ may be
computed exactly, namely $\mu=1$. Here things simplify considerably, 
because the corresponding rational map is the identity map $S^2\ra S^2$. 
The Jacobi operator for the identity map on a general Riemannian manifold 
$(M^n,g)$ was studied in detail by Smith \cite{smi}. The key 
simplification is that one has a canonical identification 
$\id^{-1}TN\equiv T^*M$, obtained
by identifying the section $Y$ of $\id^{-1}TN=TM$ with the one form
$\flat\, Y=g(Y,\cdot)$. This is useful because there is a Weitzenb\"ock
formula relating the rough Laplacian $\triangle_\id$ to the Hodge Laplacian 
$\triangle$ on
one-forms \cite[p.\ 161]{ura}
\beq
\triangle_\id Y=\sharp\, (\triangle\, \flat\, Y-\ric(Y,\cdot)).
\eeq
In the case where $M^n$ is Einstein, with (constant) scalar curvature 
$\kappa$, 
\beq
\sharp\, \ric(Y,\cdot)\equiv\frac{\kappa}{n}Y,
\eeq
which together with Proposition \ref{lidl} and the observation that
$\en=\frac{n}{2}$, constant, for $\id$, gives the following formula
for $\jac_\id$ on an Einstein manifold, originally due to Smith,
\beq
\jac_\id=\sharp\, \triangle\, \flat-\frac{2\kappa}{n}.
\eeq
In our case, $n=\kappa=2$, so $\jac_{\id}=\sharp\, \triangle\, \flat\, -2$, 
that is,
the spectrum of $\jac_\vphi$ in the special case $\mu=1$ is the spectrum of
the Hodge Laplacian on one-forms on $S^2$, shifted down by $2$. This, in
turn, can be related to the spectrum of the Laplacian on functions on $S^2$,
as follows.

Let $\Omega_p$ denote the space of smooth $p$-forms on $S^2$, and
$\dstar:\Omega_p\ra\Omega_{p-1}$ denote the coderivative, so that the
Hodge Laplacian is $\triangle=\dstar\d+\d\dstar$. Every one-form on $S^2$
has a unique Hodge decomposition
\beq
Y_1=\d Y_0+\dstar Y_2
\eeq
into exact and coexact components (the harmonic component vanishes as 
$H^1(S^2)=0$). Now $\d\Omega_0$ and $\dstar\Omega_2$ are $L^2$ orthogonal
subspaces of $\Omega_1$ and $[\triangle,\d]=[\triangle,\dstar]=0$, so the
spectral problem for $\triangle_{\Omega_1}$ decomposes into two
sub-problems, for  
$\triangle|_{\d\Omega_0}$ and $\triangle|_{\dstar\Omega_2}$. Since
$Y_1$ is coexact if and only if $*Y_1$ is exact, and
$*:\Omega_1\ra\Omega_1$ is a $L^2$ isometry, the spectra of
$\triangle|_{\d\Omega_0}$ and $\triangle|_{\dstar\Omega_2}$ 
coincide. Hence $\spec\triangle_{\Omega_1}$ is
$\spec\triangle|_{\d\Omega_0}$ with double multiplicity. In fact, 
\beq
\spec\triangle|_{\d\Omega_0}=\spec\triangle_{\Omega_0}\less\{0\}.
\eeq
To see this, let $\nu\in\spec\triangle_{\Omega_0}$ and $Y_0$ be the
corresponding eigenfunction. Then
\beq
\triangle\d Y_0=\d\triangle Y_0=\d\nu Y_0=\nu\d Y_0
\eeq
so if $\d Y_0\neq 0$ (that is, $\nu\neq 0$), then 
$\nu\in\spec\triangle|_{\d\Omega_0}$.
Conversely, let $\nu\in\spec\triangle|_{\d\Omega_0}$ and 
$\d Y_0$ be the corresponding eigenform. Then $\nu\neq 0$ (since there is no
harmonic one-form on $S^2$), so
\bea
0=\triangle\d Y_0-\nu\d Y_0&=&\d(\triangle Y_0-\nu Y_0)\nonumber \\
\Rightarrow\quad\triangle Y_0-\nu Y_0&=&c,\quad\mbox{constant}\nonumber \\
\Rightarrow\quad\triangle Y_0'-\nu Y_0'&=&0,\qquad 
\mbox{where $Y_0'=Y_0+\frac{c}{\nu}$}
\eea
and hence $\nu\in\spec\triangle_{\Omega_0}$. The spectrum of
$\triangle_{\Omega_0}$ is well known. 

We conclude that, at $\mu=1$ (equivalently $\lambda=0$),
\bea
\spec\jac_\vphi&=&\{\ell(\ell+1)-2\: :\: \ell\in\Z^+\}\nonumber \\
\mbox{multiplicity}(\ell(\ell+1)-2)&=&4\ell+2.
\eea
Since we know the eigenvalues (and eigensections) of $\jac_\vphi$ at
$\lambda=0$, we can use these as seed data for the numerical shooting method,
starting at $\lambda=0$ and increasing $\lambda$ in small steps. In this
way we can numerically construct curves $\omega_i^2(\lambda)$ showing how the
different eigenvalues vary with lump sharpness $\lambda$. Figure \ref{jacspec}
shows such curves for the lowest 48 eigenvalues. Recall that the 
eigenvalue $0$ has multiplicity 6 (the dimension of the moduli space), and
that other eigenvalues always have multiplicity (at least) 2 due to
the symmetry under $J^N$ (or, equivalently, due to the block structure
of $\jac_\vphi$ on $I_m$). 

\begin{figure}[!ht]
\begin{center}
\includegraphics[scale=0.65]{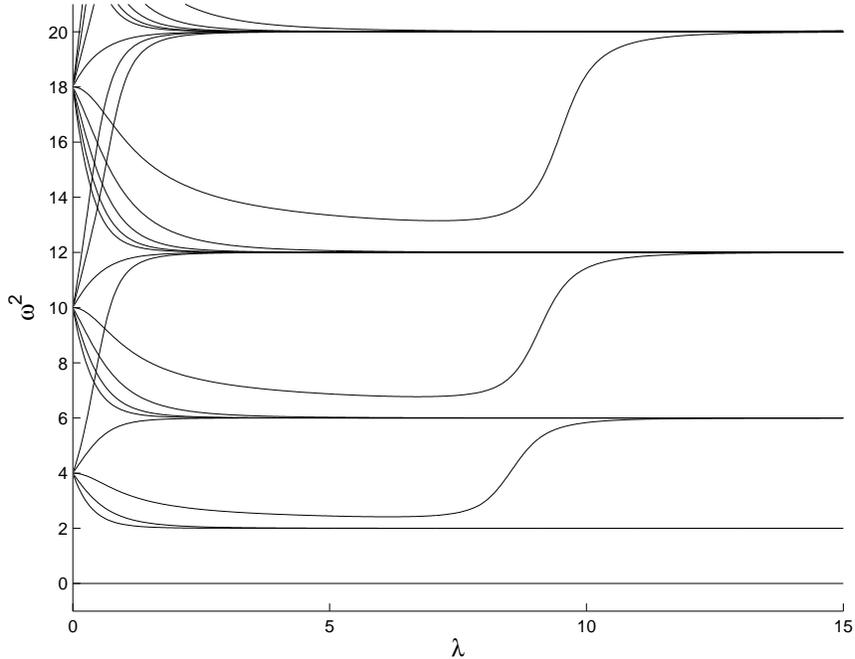}
\end{center}
\caption{The dependence of the eigenvalues of the Jacobi operator
$\jac_\vphi$ on the lump sharpness $\lambda$. Note that the eigenvalues 
interpolate between $\ell(\ell+1)-2$, $\ell\in\Z^+$, at $\lambda=0$
and $\ell(\ell+1)$, $\ell\in\N$ as $\lambda\ra\infty$.}
\label{jacspec}
\end{figure}

It is interesting to examine the $\lambda\ra\infty$ behaviour of 
$\omega_i^2(\lambda)$. The pointwise limit of $\vphi:S^2\ra S^2$ as
$\lambda\ra\infty$ is
\beq
\vphi_\infty(p)=\left\{\begin{array}{ll}
(0,0,1)&\mbox{$p\neq(0,0,-1)$}\\
(0,0,-1)&\mbox{$p=(0,0,-1)$}
\end{array}\right.
\eeq
that is, $\vphi_\infty$ is constant almost everywhere. A sensible guess
for the limiting spectrum would, therefore, be the spectrum of the
Jacobi operator at a constant map, which is known to coincide with $n$
copies of the
spectrum of the Laplacian on functions, where $n$ is the dimension of the
codomain \cite[p160]{ura}. In this case
\bea
\spec\jac_{\rm const}&=&\{\ell(\ell+1)\: :\: \ell\in\N\}\nonumber \\
\mbox{multiplicity}(\ell(\ell+1))&=&4\ell+2.
\eea
The numerics suggest that this guess for the limiting spectrum
is very nearly correct. Specifically, the eigenvalues do tend, as $\lambda
\ra\infty$ to eigenvalues of $\jac_{\rm const}$, and, apart from the 
eigenvalues $0$ and $2$, those eigenvalues tending to $\ell(\ell+1)$
have total multiplicity $4\ell+2$. So the guess is wrong only in that it 
predicts the multiplicity of the limiting eigenvalue $0$ to be
$2$ rather than $6$ (as it must be given the dimension of $\rat_1$)
and the multiplicity of $2$ to be $6$ rather than $4$ (as found 
numerically).

\subsection{The regularized Casimir energy}

Having computed $\spec\jac_\vphi$, we must now try to make sense of the
Casimir energy $\Cas(\lambda)$ in 
(\ref{Cformal}) which, as
it stands, is divergent. It is conventional to reset the zero of
potential energy so that the total zero-point energy of the vacuum is $0$.
For our purposes, $\vphi_\infty$ is (almost everywhere) the vacuum, so
it is convenient to define
\beq
\label{Cask}
\Cas_k(\lambda)=\frac12\sum_{i=1}^k(\omega_i(\lambda)-\omega_i(\infty)),
\eeq
where the eigenvalues $\omega_i(\lambda)^2>0$ are arranged 
in such a way that $\omega_i(0)^2$ is nondecreasing.

One option would be to renormalize the spectrum (\ref{Cask}) numerically, 
using for example the heat kernel approach. Here, we will employ a 
different semi-analytical approach. We first discuss the finite sums 
$\Cas_k$ for special values of $k$ and then describe how we regularize 
the diverging sum as $k \to \infty.$ We choose $k=10$, $k=24$ and $k=42$, 
which include the lowest $1$, $2$ and $3$ eigenvalues of $\jac_\id$ 
respectively.
Note that, since we have eigenvalue crossings in figure \ref{jacspec}, this 
amounts to the lowest $k$ normal modes at $\lambda=0$, but not at large 
$\lambda$ (where the eigenvalue ordering has changed). If we were to define 
$\Cas_k(\lambda)$ as the sum of the frequencies of the lowest $k$ normal 
modes at each $\lambda$, the function $\Cas_k$ would not be smooth (it would 
have corners where eigenvalues cross). 
In effect, we are making a large but finite-dimensional approximation to
the quantum field theory, in which the wavefunction is a function on a 
vector bundle over $\rat_1$, whose fibre over $\vphi$ is a union of
low-lying eigenspaces of $\jac_\vphi$. We are choosing these eigenspaces so
that they vary smoothly over $\rat_1$. The price for this is that they are,
towards the boundary of $\rat_1$, not quite the lowest energy eigenspaces
available up to dimension $k$.

\begin{figure}[ht!]
\begin{center}
\includegraphics[scale=0.65]{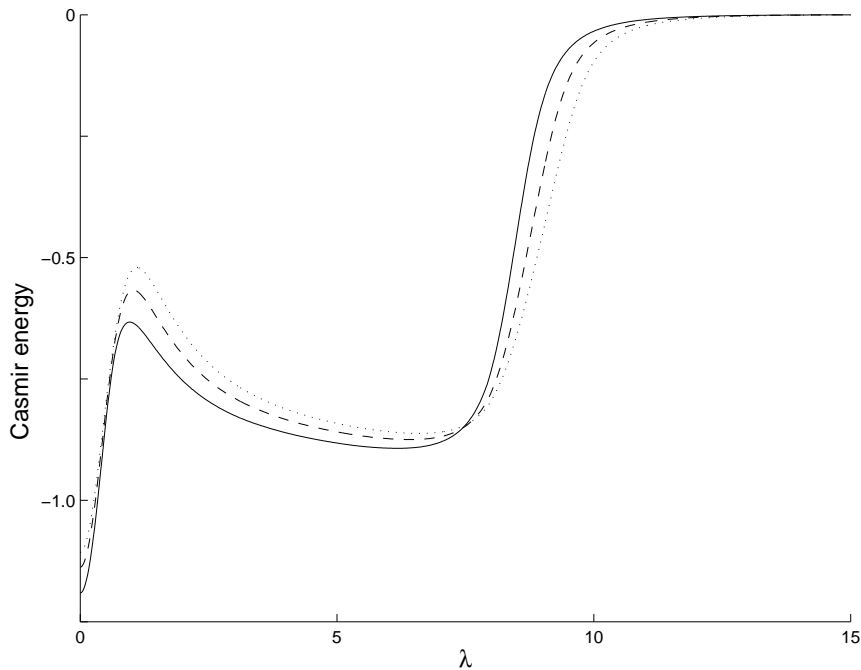}
\end{center}
\caption{The one-lump Casimir energy as a function of lump sharpness $\lambda$
for three different cut-offs: solid curve $\Cas_{10}(\lambda)$, dashed curve
$\frac12\Cas_{24}(\lambda)$, dotted curve
$\frac13\Cas_{42}(\lambda),$ where $\Cas_k$ is defined in
(\ref{Cask}). The values $k=10,$ $k=24$ and $k=42$ correspond to the
lowest 1, 2, and 3 eivenvalues of ${\cal J}_{{\rm Id}}$, respectively.}
\label{casplot}
\end{figure}

Plots of $\Cas_{10}$, $\Cas_{24}$ and $\Cas_{42}$ 
 are presented in figure \ref{casplot}. 
 The curves for $\Cas_{24}$  
and $\Cas_{42}$ have been rescaled vertically, by a factor of $\frac12$ 
and $\frac13$ respectively, to make comparison with $\Cas_{10}$ easier.
Note that these three functions are qualitatively very similar. In fact, 
the Casimir curves $\Cas_{10},$ $\Cas_{24}$ and $\Cas_{42}$ are
approximately self-similar up to a factor which diverges as $k \to \infty.$ 
We have opted to use the self-similarity and regularize the diverging 
factor (the depth of the well at $\lambda=0$). So we define
the approximate renormalized Casimir energy $\Cas(\lambda)$ to be
\beq
\label{Cas}
\Cas(\lambda)=C_*\frac{\Cas_{10}(\lambda)}{|\Cas_{10}(0)|}
\eeq
where $C_*$ is the renormalized Casimir energy of the $\lambda=0$
lump. This can be computed exactly using zeta
function regularization, because, as we have seen
 the spectra of the Jacobi operator for
the identity map ($\lambda=0$ lump) and the constant map (the
vacuum) are known exactly.

\subsection{Zeta function regularization}

For $\lambda = 0,$ the spectrum of the Jacobi operator is known
explicitly, namely $\omega_0^2 = 0$ with
multiplicity $\mu_0 = 6$ 
and $\omega_l^2 = l(l+3)$ with multiplicity $\mu_l = 4l+6$
for $l=1,2,\dots$ This enables us to calculate the Casimir energy
using zeta function regularization. The key idea is to write down the
corresponding zeta function 
\begin{equation}
\label{zeta}
\zeta(\nu) = \sum\limits_{l=1}^\infty \mu_l  \left(\omega_l^2
\right)^{-\nu},
\end{equation}
leaving out the zero modes. The zeta function (\ref{zeta}) is
absolutely convergent as long as the real part of $\nu$ is sufficiently
large. In this region, $\zeta(\nu)$ can be viewed as an analytic
function of $\nu$. We are interested in the Casimir energy which
corresponds to the value $\nu=-\frac12$. In this case, the formal sum in
(\ref{zeta}) is divergent, however, $\zeta(\nu)$ defined as analytic
continuation is well-defined. It is convenient to rewrite $\zeta(\nu)$ as
the following sum:
\beq
\zeta(\nu) = \sum\limits_{l=1}^\infty 2 l (l(l+3))^{-\nu} + 
\sum\limits_{l=1}^\infty 2 (l+3) (l(l+3))^{-\nu} 
\eeq
then we can use formula (5.8) in \cite[p.\ 122]{eli} to obtain
$$
\zeta(-\tfrac12) = -2.373.
$$
The corresponding calculation for the vacuum, whose  spectrum is 
$\omega_l^2 =l(l+1)$ 
with multiplicity $\mu_l = 4l+2$, leads to the zeta function
\beq
\zeta(\nu) = \sum\limits_{l=1}^\infty 2 l (l(l+1))^{-\nu} + 
\sum\limits_{l=1}^\infty 2 (l+1) (l(l+1))^{-\nu} 
\eeq
and yields 
$$
\zeta_{{\rm vac}}(-\tfrac12) = -0.530.
$$
See also equation (5.34) in \cite[p. 126]{eli}.
Hence, the Casimir energy of the $\lambda=0$ lump on the unit
two-sphere can be evaluated using
$$
\frac12 \left(\zeta(-\tfrac12) - \zeta_{{\rm vac}}(-\tfrac12) \right)
= -0.921.
$$
Hence
$$
C_*=|\Cas(0)| = 0.921.
$$
This is the energy scale we used to set the renormalized energy scale
for our numerically generated Casimir energy function 
$\Cas(\lambda)$ in (\ref{Cas}). 

Our approximation to the Casimir energy $\Cas(\lambda)$ is now 
given by the rescaled and shifted curve $\Cas_{10}(\lambda)$ in figure
\ref{casplot}. It is non-singular and appears to be smooth. It is worth
summarizing the approximations involved. Our calculation relies on a
conjectured self-similarity of the curves in figure \ref{casplot}. We
also assume that we can neglect the effects of crossing modes. In
particular, we assume that the zeta-function regularization can be
performed pointwise when we regularize the Casimir energy at
$\lambda=0$ and $\lambda=\infty.$ Furthermore, we assume that the
$\Cas(\lambda)\ra 0$ as $\lambda\ra 0$ because the lumps converge 
almost everywhere to the vacuum in that limit.

The obvious alternative to our approach, namely a
numerical evaluation using heat kernel or zeta function regularization
as in \cite{mosshitor, mos}, is beyond the scope of this paper.

\section{The energy spectra}
\news

In the following, we describe how to calculate the spectrum of
the Laplacian (\ref{lap}). In order to make use of the physics literature 
on this topic we rewrite the Laplacian using angular momentum 
operators. Note that the operators ${\bf J} = - i \thetavec$,
${\bf L} = -i \lamvec\times\cdv$, ${\bf S}={\bf J}+{\bf L}$
and ${\bf K}=-i\xivec$ all satisfy the canonical commutation
relations for angular momenta, namely, 
$$
[G_1, G_2] = i G_3,\qquad {\bf G}={\bf J}, {\bf K}, {\bf L}, {\bf S}
$$
and cyclic permutations. Recall that $i{\bf K}$ generates the left $SO(3)$
action, on the target $S^2$, and $i{\bf S}$ generates the right $SO(3)$
action, on the physical $S^2$, so we refer to these operators as
isospin and spin respectively. Making use of these operators, one sees that
 the Laplacian (\ref{lap}) is
\bea
\triangle \psi&=&
-\frac{1}{\Lambda\lambda^2A^2B}\frac{\cd\:}{\cd\lambda}\left(
\frac{\lambda^2\Lambda^3A^2}{4}\frac{\cd \psi}{\cd\lambda}\right)+
\frac{1}{\lambda^2\Lambda^2A}{\bf L}^2 \psi \nonumber \\
&&+ \frac{2}{\Lambda^2A}\left\{ {\bf J}^2 \psi+ {\bf S}^2\psi-
\frac{2}{\lambda^2}\left[1-
\frac{\Lambda^2A}{4B}\right](\lamvec \cdot {\bf S})^2\psi\right\}.
\label{lapnew}
\eea
Here, we replaced the ${\bf L} \cdot {\bf J}$ term using the
convenient operator identity
\beq
2 {\bf L}\cdot {\bf J} = {\bf S}^2 - {\bf L}^2 - {\bf J}^2.
\eeq
We have already shown that $\triangle$ commutes with ${\bf S}$ and ${\bf K}$,
as it must, by $SO(3)\times SO(3)$ invariance. In fact, the operators
$\triangle, |{\bf S}|^2, S_3, |{\bf K}|^2, K_3, P$,
where $P: {\pmb \lambda} \mapsto -{\pmb \lambda}$ is the parity operator,
 all mutually commute, so we can seek simultaneous eigenstates of these
six operators. 

\subsection{Spin-isospin interchange symmetry}
\label{interchange}

Let $s(s+1)$ and $k(k+1)$ be the eigenvalues of
$|{\bf S}|^2$ and $|{\bf K}|^2$. It is a somewhat
surprising fact that the spectrum of $\triangle$ is invariant under
interchange of $s$ and $k$. 
Even more surprising is that this remains
true even if we give the physical and target spheres different radii. 
The key to seeing this is the following isometry of any
$SO(3)\times SO(3)$ invariant k\"ahler metric on $\rat_1$.

\begin{prop} Identify $\rat_1\cong SO(3)\times\R^3$ with $TSO(3)$ via
  $(R,\lamvec)\equiv (\lamvec\cdot\thetavec)(R)\in T_RSO(3)$. The mapping
$f:SO(3)\ra SO(3)$, $f(R)=R^{-1}$ induces a mapping $df:TSO(3)\ra TSO(3)$.
Then $df$ is an isometry of any $SO(3)\times SO(3)$ invariant k\"ahler
metric on $TSO(3)$.
\end{prop}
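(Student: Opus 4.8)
The plan is to pin down the map $df$ completely explicitly, recognise it as a biholomorphism, and then let the classification of Proposition~\ref{metprop} do the heavy lifting. First I would compute $df$ in the coordinates $(R,\lamvec)\in SO(3)\times\R^3$. Writing a tangent vector in left-invariant form, $(\lamvec\cdot\thetavec)(R)=R\Xi$ with $\Xi=\sum_a\lambda_a T_a\in\so(3)$, and using that the differential of matrix inversion is $df_R(v)=-R^{-1}vR^{-1}$, one gets $df_R(R\Xi)=-\Xi R^{-1}=R^{-1}(-\mathrm{Ad}_R\Xi)$, so that
\[
df:(R,\lamvec)\mapsto\bigl(R^{-1},\,-\mathrm{Ad}_R\lamvec\bigr),
\]
where $\mathrm{Ad}_R$ is rotation by $R$ on $\R^3\cong\so(3)$. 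The decisive observation is that this is the same as matrix inversion $[L]\mapsto[L^{-1}]$ on $\rat_1\cong PL(2,\C)$: from the polar decomposition $[L]=[U(\Lambda\I_2+\lamvec\cdot\tauvec)]$ and the identity $(\Lambda\I_2+\lamvec\cdot\tauvec)^{-1}=\Lambda\I_2-\lamvec\cdot\tauvec$ (valid since $(\lamvec\cdot\tauvec)^2=\lambda^2\I_2$ and $\Lambda^2-\lambda^2=1$) one finds $[L^{-1}]=[U^{-1}(\Lambda\I_2+(-\mathrm{Ad}_U\lamvec)\cdot\tauvec)]$, exactly reproducing the formula above. In the homogeneous coordinates of $\rat_1\hookrightarrow\CP^3$ this inversion is the projective-linear map $[a_1,a_2,a_3,a_4]\mapsto[a_4,-a_2,-a_3,a_1]$, which is manifestly holomorphic, so $df$ is a biholomorphism of $\rat_1$.

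Proving this explicit formula, and hence the biholomorphicity of $df$, is the step I expect to be the main obstacle; everything after it is structural. Since $df$ is biholomorphic and $\gamma$ is k\"ahler, the pullback $(df)^*\gamma$ is again k\"ahler. Next I would verify that $(df)^*\gamma$ is $SO(3)\times SO(3)$-invariant. A short computation in the $(R,\lamvec)$ coordinates shows that $df$ (which is an involution, $df\circ df=\mathrm{id}$) conjugates the group action into itself with the two factors swapped: for $g=(R_1,R_2)$ acting by $(R,\lamvec)\mapsto(R_1RR_2^{-1},R_2\lamvec)$ one finds $df\circ g\circ df=(R_2,R_1)$. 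Hence, for every $g$,
\[
g^*\bigl((df)^*\gamma\bigr)=(df\circ g)^*\gamma=\bigl((R_2,R_1)\circ df\bigr)^*\gamma=(df)^*\bigl((R_2,R_1)^*\gamma\bigr)=(df)^*\gamma,
\]
using invariance of $\gamma$ under $(R_2,R_1)\in SO(3)\times SO(3)$. This swap is of course precisely the spin--isospin interchange the proposition is designed to deliver.

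It follows that $(df)^*\gamma$ is an $SO(3)\times SO(3)$-invariant k\"ahler metric, so by Proposition~\ref{metprop} it has the stated form for a single function $\tilde A(\lambda)$, and it remains only to check $\tilde A=A$. I would evaluate $\tilde A$ at the standard point $(\I_3,(0,0,\lambda))$, where $\tilde A(\lambda)=(df)^*\gamma(\cd/\cd\lambda_1,\cd/\cd\lambda_1)$. The explicit formula sends this point to $(\I_3,(0,0,-\lambda))$ with $d(df)(\cd/\cd\lambda_1)=-\cd/\cd\lambda_1$, so $\tilde A(\lambda)=\gamma(\cd/\cd\lambda_1,\cd/\cd\lambda_1)\big|_{(\I_3,(0,0,-\lambda))}$. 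At that point every term of $\gamma$ containing a factor of $\sigvec$ or of $\lamvec\cdot\d\lamvec$ annihilates $\cd/\cd\lambda_1$, leaving only $A_1\,\d\lamvec\cdot\d\lamvec$, whose value is $A(\lambda)$. Thus $\tilde A=A$, the two metrics agree, and $df$ is an isometry. The only genuinely delicate points are the sign bookkeeping in the differential of inversion and the check that the identification $TSO(3)\cong\rat_1$ really does intertwine $df$ with matrix inversion.
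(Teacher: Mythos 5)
Your proposal is correct, but it takes a genuinely different route from the paper. You and the paper both start from the same explicit formula $df:(R,\lamvec)\mapsto(R^{-1},-Ad_R\lamvec)$, but from there the paper simply takes a curve $q(t)=(R(t),\lamvec(t))$ with $\dot q(0)=(R(0)\Omegavec,\vvec)$, computes the velocity of the image curve, $(\wt{\Omegavec},\wt{\vvec})=-(Ad_R\Omegavec,Ad_R(\vvec-[\lamvec,\Omegavec]_{\so(3)}))$, and expands both squared norms using the explicit form of Proposition \ref{metprop}; the cancellation works precisely because the cross-term coefficient equals $A_1$, which is where the k\"ahler hypothesis enters. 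You instead identify $df$ with matrix inversion $[L]\mapsto[L^{-1}]$ on $PL(2,\C)$, i.e.\ with the projective-linear (hence holomorphic) involution $[a_1,a_2,a_3,a_4]\mapsto[a_4,-a_2,-a_3,a_1]$ of $\CP^3$, so that $(df)^*\gamma$ is automatically k\"ahler; you then use the conjugation identity $df\circ(R_1,R_2)\circ df=(R_2,R_1)$ to get invariance, invoke the classification of Proposition \ref{metprop} to reduce everything to a single coefficient function, and pin that function down by evaluating at $(\I_3,(0,0,\lambda))$. Both arguments are complete and correct (your polar-decomposition computation $[L^{-1}]=[U^{-1}(\Lambda\I_2+(-Ad_U\lamvec)\cdot\tauvec)]$ checks out, including positivity of the Hermitian factor, and the one-point evaluation $\tilde A=A$ is right since $\lambda_1=0$ kills all terms but $A_1\,\d\lamvec\cdot\d\lamvec$). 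What the paper's route buys is brevity and transparency about which algebraic relation among the $A_i$ makes the isometry work; what yours buys is structure: it exhibits the hidden isometry as the restriction of a holomorphic projective transformation of the ambient $\CP^3$, and it front-loads the spin--isospin swap $df\circ g\circ df=(R_2,R_1)$, which the paper needs anyway in Proposition \ref{symmetry} — at the cost of leaning on Proposition \ref{metprop} as a classification theorem rather than merely as a formula, and with the inversion-identification computation being roughly as long as the paper's direct check.
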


\begin{proof} The identification $\rat_1\equiv TSO(3)$ amounts to
thinking of $\lamvec$ as a vector in the Lie algebra $\so(3)$ (skew
$3\times 3$ real matrices) and identifying $T_RSO(3)$ with $\so(3)=T_e SO(3)$
by left translation. Hence, the map in question is
$$
df:(R,\lamvec)\mapsto (R^{-1}, -Ad_R\lamvec).
$$
Let $q(t)=(R(t),\lamvec(t))$ be a curve in $\rat_1$ with 
$\dot{q}(0)=(R(0)\Omegavec,\vvec)$,
$\Omegavec,\vvec\in \so(3)$. Then, with respect to the $SO(3)\times SO(3)$
invariant k\"ahler metric determined by the functions $A_i(\lambda),$ as 
in Proposition \ref{metprop},
$$
\|\dot{q}(0)\|^2=A_1|\vvec|^2+A_2(\lamvec\cdot\vvec)^2+
A_3|\Omegavec|^2+A_4(\lamvec\cdot\Omegavec)^2+
A_1\vvec\cdot (\lamvec\times\Omegavec).
$$
The image of this curve under $df$ is $\wt{q}(t)=(R^{-1}(t),-R(t)\lamvec(t)
R^{-1}(t))$, which has 
$\dot{\wt{q}}(0)=(R^{-1}(0)\wt{\Omegavec},\wt{\vvec})$, where
$$
(\wt{\Omegavec},\wt{\vvec})= 
-(Ad_R\Omegavec,Ad_R(\vvec-[\lamvec,\Omegavec]_{\so(3)})).
$$
Recalling that the Lie bracket on $\so(3)$ coincides with minus the vector
product on $\R^3$ under the natural identification $\so(3)\equiv \R^3$, 
and the adjoint action of $SO(3)$ on $\so(3)$ coincides with the 
fundamental action on $\R^3$, we
find that
\ben
\|\dot{\wt{q}}(0)\|^2&=&A_1|\vvec+\lamvec\times\Omegavec|^2+
A_2(\lamvec\cdot\vvec)^2+
A_3|\Omegavec|^2+A_4(\lamvec\cdot\Omegavec)^2-
A_1(\vvec+\lamvec\times\Omegavec)\cdot(\lamvec\times\Omegavec).
\\
&=&
\|\dot{q}(0)\|^2.
\een
Hence $df$ is an isometry, as claimed.
\end{proof}

\begin{prop}\label{symmetry} Given $E\in\R$ and $k,s\in\N$, denote by 
$X_{E,k,s}$ the simultaneous eigenspace of
$$
H=\tfrac12\triangle+V,\qquad
|{\bf K}|^2,\qquad |{\bf S}|^2
$$
with eigenvalues $E$, $k(k+1)$ and $s(s+1)$, where $V:\rat_1\ra \R$
is any $SO(3)\times SO(3)$ invariant potential. Then $X_{E,k,s}$ and
$X_{E,s,k}$ have equal dimension.
\end{prop}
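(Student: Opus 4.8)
The plan is to leverage the isometry $df$ of the preceding proposition, which is engineered precisely to swap the two $SO(3)$ factors of $G=SO(3)\times SO(3)$. Writing the isospin (left) action as $\Psi_L(R_1):(R,\lamvec)\mapsto(R_1R,\lamvec)$ and the spin (right) action as $\Psi_R(R_2):(R,\lamvec)\mapsto(RR_2^{-1},R_2\lamvec)$, the decisive first step is to verify the intertwining relations $df\circ\Psi_L(R)=\Psi_R(R)\circ df$ and $df\circ\Psi_R(R)=\Psi_L(R)\circ df$ for all $R\in SO(3)$. Each is a direct computation from the explicit formula $df:(R,\lamvec)\mapsto(R^{-1},-Ad_R\lamvec)$, using that $Ad_R$ acts on $\lamvec\in\so(3)\cong\R^3$ by the fundamental representation; for instance both sides of the first relation send $(R,\lamvec)$ to $(R^{-1}R_1^{-1},-R_1R\lamvec)$. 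I expect this bookkeeping, keeping the adjoint factors and the left/right translations straight, to be the main obstacle, since everything afterwards is formal.

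Next I would record two elementary consequences. First, $df$ is an involution, since $Ad_{R^{-1}}Ad_R=\mathrm{id}$ gives $df\circ df=\mathrm{id}$; hence the pullback operator $U\psi=\psi\circ df$ obeys $U^2=\mathrm{id}$ and is a linear bijection. Second, because $df$ is an isometry of $\gamma$ it commutes with the Hodge Laplacian, $\triangle(\psi\circ df)=(\triangle\psi)\circ df$; and because $df$ preserves $\lambda=|\lamvec|$ while any $G$-invariant potential $V$ is a function of $\lambda$ alone (the action being cohomogeneity one with orbits the level sets of $\lambda$), we have $V\circ df=V$. Together these give $UH=HU$ for $H=\tfrac12\triangle+V$.

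I would then promote the intertwining relations to a statement about the induced representations $\rho_K,\rho_S$ of $SO(3)$ on functions (isospin and spin). From $df\circ\Psi_L(R)\circ df=\Psi_R(R)$ together with $U^2=\mathrm{id}$ one obtains $U\rho_K(R)U^{-1}=\rho_S(R)$ for all $R$; differentiating at the identity carries the infinitesimal generators of $\rho_K$ to those of $\rho_S$, and hence swaps the quadratic Casimirs: $U|{\bf K}|^2U^{-1}=|{\bf S}|^2$ and, symmetrically, $U|{\bf S}|^2U^{-1}=|{\bf K}|^2$.

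Finally I would assemble the three facts $UH=HU$, $U|{\bf K}|^2U^{-1}=|{\bf S}|^2$ and $U|{\bf S}|^2U^{-1}=|{\bf K}|^2$. For $\psi\in X_{E,k,s}$, using $U^{-1}=U$ one checks $H(U\psi)=E\,U\psi$, $|{\bf K}|^2(U\psi)=U|{\bf S}|^2\psi=s(s+1)\,U\psi$ and $|{\bf S}|^2(U\psi)=k(k+1)\,U\psi$, so $U\psi\in X_{E,s,k}$. Thus $U$ restricts to a linear isomorphism $X_{E,k,s}\to X_{E,s,k}$ (with inverse its own restriction to $X_{E,s,k}$), whence $\dim X_{E,k,s}=\dim X_{E,s,k}$.
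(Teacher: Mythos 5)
Your proposal is correct and follows essentially the same route as the paper: both exploit the isometry $df$ of the preceding proposition, show that the pullback $df^*$ commutes with $H$ (isometry plus preservation of $\lambda$, so $V\circ df=V$), interchanges spin and isospin, and is an involution, hence restricts to an isomorphism $X_{E,k,s}\to X_{E,s,k}$. The only difference is presentational: the paper states the infinitesimal relation ${\bf K}\,df^*=-df^*\,{\bf S}$ directly, whereas you verify the group-level intertwining $df\circ\Psi_L(R)=\Psi_R(R)\circ df$ and pass to the quadratic Casimirs by conjugation of representations, which neatly sidesteps the sign bookkeeping.
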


\begin{proof}
Under the identification $\rat_1\equiv TSO(3)$, the left 
and right 
actions of
$SO(3)$ on $\rat_1$ coincide with the natural
left and right actions of $SO(3)$ on $TSO(3)$. Denote by $df^*$ the
induced map on $L^2(TSO(3),\C)$, $df^*\psi=\psi\circ df$. Then, since
$df$ is an isometry and preserves the length of $\lamvec$,
$H df^*=df^*H$. Furthermore, $df$ interchanges
the left and right $SO(3)$ actions on $TSO(3)$, so ${\bf K}df^*=-df^*{\bf 
S}$ and ${\bf S}df^*=-df^*{\bf K}$.
Hence, $df^*:X_{E,k,s}\ra X_{E,s,k}$ and $df^*:X_{E,s,k}\ra X_{E,k,s}$
for all $E,k,s$.
Now $df^2=\id$, so $(df^*)^2=\id$, and hence $df^*$ is
invertible, so the eigenspaces $X_{E,k,s}$ and $X_{E,s,k}$ are isomorphic.
\end{proof}

Note  that this $k,s$ interchange symmetry holds
for {\em any} invariant k\"ahler metric on $\rat_1$, so is not special
to the case where the domain and target two-spheres have equal radius.
Also note that it relies heavily on the k\"ahler property of the metric,
and does not follow from $SO(3)\times SO(3)$ invariance alone (or, indeed,
just invariance and hermiticity). This is a ``hidden'' symmetry
which one cannot see directly from the field theory. It is an
interesting question whether a direct physical argument can be given to
explain this symmetry. A construction involving a supersymmetric 
extension of the model would be a natural candidate.

\subsection{Reduction to a Sturm-Liouville problem}

The wavefunction is a function $\psi:SO(3)\times\R^3\ra\C$.
According to the Peter-Weyl theorem \cite{petwey}, the matrix elements of
all irreducible unitary representations of the group $SO(3)$ form
an orthonormal basis for $L^2(SO(3),\C)$. Recall that such 
representations are labelled by $k\in\{0,1,2,\ldots\}$, and that the
matrix elements of the $k$ representation are functions
$\pi^{(k)}_{j_3,k_3}:SO(3)\ra\C$ where $-k\leq j_3,k_3\leq k$. We have 
chosen the symbol $k$ to label the representations because
\beq
|{\bf K}|^2\pi^{(k)}_{j_3,k_3}=
|{\bf J}|^2\pi^{(k)}_{j_3,k_3}=k(k+1)\pi^{(k)}_{j_3,k_3}.
\eeq
Further, we may choose the basis for $\C^{2k+1}$ so that
\beq
K_3\pi^{(k)}_{j_3,k_3}=k_3\pi^{(k)}_{j_3,k_3},\qquad
J_3\pi^{(k)}_{j_3,k_3}=j_3\pi^{(k)}_{j_3,k_3},
\eeq
which is why we have labelled the matrix elements with $j_3,k_3$. 
For the $\R^3$ dependence of $\psi$ we use spherical polar coordinates and
expand the angular dependence in spherical harmonics.
Hence
we may express the wavefunction as follows,
\beq
\psi=\sum_{k=0}^\infty\sum_{l=0}^\infty\sum_{j_3=-k}^k\sum_{k_3=-k}^k
\sum_{l_3=-l}^l a^{k,l}_{j_3,k_3,l_3}(\lambda)\ket{k,l,j_3,k_3,l_3},
\eeq
where
\beq
\ket{k,l,j_3,k_3,l_3}=\pi^{(k)}_{j_3,k_3}Y_{l,l_3},
\eeq
and $Y_{l,l_3}:S^2\ra\C$ denote spherical harmonics. 

Recall that
 $H,|{\bf K}|^2,K_3,|{\bf S}|^2,S_3$ are mutually commuting, so we may
solve the
eigenvalue problem for $H$ on each simultaneous
eigenspace of $|{\bf K}|^2,K_3,|{\bf S}|^2,S_3$
separately. Clearly,
$H$ is independent of $k_3$ and $s_3$ so we may, and henceforth will, 
without loss
of generality, set $k_3=s_3=0$, remembering to multiply all degeneracies
by $(2k+1)(2s+1)$ to account for the other values of $(k_3,s_3)$. Furthermore,
by Proposition \ref{symmetry} we may, without loss of generality,
assume that $k\leq s$, doubling the multiplicity if $k<s$.
Recall that ${\bf S}={\bf J}+{\bf L}$ and that ${\bf J}$ and ${\bf L}$
satisfy the angular momentum algebra. Hence a
basis $\{\ket{k,s,l}\: :\: |k-s|\leq l\leq k+s\}$
for the $(k,0,s,0)$ eigenspace can be constructed, using Clebsch-Gordon
coefficients, on which the operators $|{\bf J}|^2=
|{\bf K}|^2$, $|{\bf S}|^2,$ $|{\bf L}|^2$ 
act naturally
\bea
|{\bf J}|^2 \ket{k,s,l} &=& k(k+1) \ket{k,s,l}\\
|{\bf S}|^2 \ket{k,s,l} &=& s(s+1) \ket{k,s,l}\\
|{\bf L}|^2 \ket{k,s,l} &=& l(l+1) \ket{k,s,l},
\eea
see Appendix \ref{AngularMomentum} for details.

By expanding the wavefunction as 
$\psi=\sum_{l=|k-s|}^{k+s}a_l(\lambda)\ket{k,s,l}$,
we may express the Hamiltonian 
$H=\frac12\triangle +V$ as a $m\times m$ matrix of differential
operators acting on the vector function 
$a:[0,\infty)\ra \C^m$, where $m=2\min(k,s)+1$. Its structure is
\beq
Ha=-p_1(\lambda)\frac{d\:}{d\lambda}\left(p_2(\lambda)\frac{da}{d\lambda}
\right)+p_3(\lambda)M_1a+p_4(\lambda)(k(k+1)+t(t+1))a+p_5(\lambda)M_2a
+V(\lambda)a
\eeq
where $p_i(\lambda)$ are rather complicated but explicitly known functions 
of $\lambda$ 
(see Proposition \ref{lapprop}) and $M_1$, $M_2$ are $m\times m$ matrices
corresponding to ${\bf L}^2$ and $(\lamvec\cdot {\bf S})^2$ respectively.
By our choice of basis, $M_1$ is diagonal and has entries $l(l+1)$ with
$l$ running from $|k-s|$ (top left) to $k+s$ (bottom right). 
The matrix $M_2$ corresponding to operator $(\lamvec\cdot{\bf S})^2$ is
the only non-diagonal term in the Hamiltonian and is
 discussed in more detail in 
appendix \ref{AngularMomentum}. It mixes states of different 
$l$, which differ by 2.\, This leads to a natural chess board structure. 
Hence, by a reordering of the basis vectors, the operator can be written in 
block diagonal form with one block corresponding to the states with
even $l$ and the other block to odd $l$. These blocks correspond to the
decompostion of the $(k,0,s,0)$ eigenspace into $P=+1$ (even $l$) and
$P=-1$ (odd $l$) parity eigenspaces noting that $P Y_{l,m} = (-1)^l 
Y_{l,m}.$
In summary, for fixed $(k,s)$, the eigenvalue problem for any Hamiltonian
of the form $H=\frac12\triangle+V(\lambda)$ reduces to
a matrix-valued Sturm-Liouville problem on $[0,\infty)$ of dimension
$m=2\min(k,s)+1$.

\subsection{Boundary conditions}

We can now address the eigenvalue equations
\begin{equation}
\label{Heq}
H_i a = E a,
\end{equation}
where $E$ is the energy eigenvalue, and the Hamiltonians $H_0,$ $H_1$
and $H_2$ are given by  (\ref{Hamiltonians}). 
As just explained, the spectral problem reduces to a sequence of 
matrix-valued Sturm-Liouville problems indexed by $s\in\{0,1,2,\ldots\}$
and $k\in\{0,1,\ldots,s\}$, where each subproblem has 
dimension $2\min(k,s)+1.$ 
In order to calculate the spectrum of the Hamiltonians $H_0,$ $H_1$ and 
$H_2$ we have to derive not only the relevant differential equations but 
also the appropriate boundary conditions. For many important examples in 
mathematical physics, the boundary conditions
are determined solely by the requirement that the wavefunction be
$L^2$ finite. 
However, in our case, $L^2$ finiteness is not always sufficient, and 
 we have to apply the theory of singular Sturm-Liouville 
equations, following \cite{Pryce}. 

The asymptotic behaviour of the $L^2$ metric on $\rat_1$ can be
obtained by direct calculation:
\beq
\begin{array}{llll}
A(\lambda) \sim \frac{32\pi}{3}\quad& {\rm as~} \lambda \to 0, \quad&
A(\lambda) \sim \frac{8\pi}{\lambda^2}& {\rm as~} \lambda \to \infty,\\
B(\lambda) \sim \frac{8\pi}{3}& {\rm as~} \lambda \to 0, &
B(\lambda) \sim \frac{4\pi \log \lambda}{\lambda^4}
& {\rm as~} \lambda \to \infty.
\end{array}
\eeq
Using the above limits, we obtain the leading order equation for $(H_0-E)f=0$ 
as $\lambda \to \infty,$
\beq
\label{indiciali}
\frac{\partial^2 f}{\partial \lambda^2} +
\frac{1}{\lambda}\frac{\partial f}{\partial \lambda} 
- \frac{4}{\lambda^2}\left({\hat \lamvec}\cdot {\bf S}\right)^2 f = 0,
\quad {\rm with} \quad {\hat \lamvec} = \frac{\lamvec}{\lambda}, 
\eeq
and the leading order equation for $(H_0-E)f=0$ as $\lambda \to 0,$
\beq
\label{indicial0}
\frac{\partial^2 f}{\partial \lambda^2} +
\frac{2}{\lambda}\frac{\partial f}{\partial \lambda} 
- \frac{l(l+1)}{\lambda^2} f = 0.
\eeq
For $\lambda \to 0$ equation (\ref{Heq}) for $H_0$ has a regular
singular point. The asymptotic form of the solutions can be
derived from (\ref{indicial0}) and is given by\footnote{Higher order
  terms in the expansions for $\lambda \to 0$ and $\lambda \to \infty$
  have been calculated in \cite{McGlade} for $H_0$ with $k=0.$} 
\beq
f(\lambda) = c_1 \lambda^l 
+ c_2 \lambda^{-l-1}.
\eeq
Note that the leading order asymptotic behaviour is independent of the
energy eigenvalue $E$.
We are interested in solutions that are $L^2$ finite which leads to
the condition
\beq
\label{L2finite}
\int f^2 {\rm vol} < \infty.
\eeq
Using polar coordinates for (\ref{vol}) we obtain the two asymptotic
behaviours, 
$$
f^2 {\rm vol} \sim  {\tilde c}_1 \lambda^{2l+2} \quad {\rm and} \quad
f^2 {\rm vol} \sim  {\tilde c}_2 \lambda^{-2l},
$$
for suitable constants ${\tilde c}_1$ and ${\tilde c}_2.$
Hence $l>0$ corresponds to the limit-point case, since only the first
solution is $L^2$ finite, see e.g. \cite{Pryce} for further details.
However, for $l=0$ the situation is slightly
more subtle. Both asymptotic solutions are $L^2$ finite. This is known
as the limit-circle case.\footnote{The same subtlety occurs when
solving the Schr\"odinger equation for the hydrogen atom in
spherical polar coordinates.} 
The Hamiltonians $H_1,H_2$ lead to the same asymptotic behaviour since the
curvature  and Casimir energy
are finite as $\lambda\to 0$.
 
For $\lambda \to \infty$ it is convenient to analyse the boundary
conditions in a different basis so that the operator 
$({\hat \lamvec}\cdot {\bf S})$ is diagonal,
$$
\left({\hat \lamvec}\cdot {\bf S}\right)^2 f = p^2 f,
$$
where the integer $p$ satisfies $-\min(k,s) \le p \le  \min(k,s),$ see
appendix \ref{AngularMomentum}. 
Then the asymptotic solution follows from (\ref{indiciali}) and is given by
\beq
\label{asymk}
f(\lambda) = c_1 \lambda^{-2p} 
+ c_2 \lambda^{2p},
\eeq
for $p \neq 0$ and
\beq
f(\lambda) = c_1 + c_2 \log(\lambda),
\eeq
for $p=0.$ Again the leading order behaviour is independent of the
energy eigenvalue $E.$

Both $H_1$ and $H_2$ contain the scalar curvature function $\kappa(\lambda)$
which is known to diverge to infinity as $\lambda\ra\infty$. In fact,
using the formula obtained in \cite{spe3}, and accounting for the change
in normalizations (recall we are giving the domain and target spheres
unit radius), we find the asymptotic formula
\beq
\kappa(\lambda)\sim\frac{1}{16\pi}\frac{\lambda^4}{\log\lambda}\qquad
\mbox{as $\lambda\ra\infty$}.
\eeq
This adds a term to (\ref{indiciali}), namely,
\beq
\label{indicialik}
\frac{\partial^2 f}{\partial \lambda^2} +
\frac{1}{\lambda}\frac{\partial f}{\partial \lambda} 
- \frac{1}{\lambda^2}\left(4\left({\hat \lamvec}\cdot {\bf S}\right)^2
+\frac{1}{2\log(\lambda)^2} 
\right) f = 0.
\eeq
For $p\neq 0$ we can solve (\ref{indicialik}) in terms of modified
Bessel functions 
\beq
f(\lambda) = 
c_1 \sqrt{\log \lambda} K_{\sqrt{3}/2}\left(2 p \log (\lambda)\right)
+ c_2 \sqrt{\log \lambda} I_{\sqrt{3}/2}\left(2 p \log (\lambda)\right).
\eeq
With the help of asymptotic expansions, see e.g. \cite{abra}, it
can be shown that the leading order is identical to (\ref{asymk}).
For $p=0$ we obtain
\beq
f(\lambda) = c_1 \log(\lambda)^{\frac{1}{2} - \frac{1}{2} \sqrt{3}} 
+ c_2 \log(\lambda)^{\frac{1}{2} + \frac{1}{2} \sqrt{3}}.
\eeq
Since the Casimir energy is bounded, the asymptotic behaviour of $H_2$
as $\lambda \to \infty$ is the same as $H_1.$

As $\lambda \to \infty$ the $L^2$ finite condition (\ref{L2finite})
again leads to two asymptotic behaviours,
$$
f^2 {\rm vol} \sim  {\tilde c}_1 \log(\lambda) \lambda^{-4p-5} 
\quad {\rm and} \quad
f^2 {\rm vol} \sim  {\tilde c}_2 \log(\lambda) \lambda^{4p-5},
$$
for suitable constants ${\tilde c}_1$ and ${\tilde c}_2$ in the case
$p > 0.$  For $p=0,$ the curvature term has an influence on the
asymptotic behaviour. For $H_0,$ we obtain
$$
f^2 {\rm vol} \sim  {\tilde c}_1 \log(\lambda) \lambda^{-5} 
\quad {\rm and} \quad
f^2 {\rm vol} \sim  {\tilde c}_2 \log(\lambda)^3 \lambda^{-5},
$$
whereas for $H_1$ and $H_2$ the asymptotic behaviour is 
$$
f^2 {\rm vol} \sim  {\tilde c}_1 \log(\lambda)^{2 -
  \sqrt{3}} \lambda^{-5}  
\quad {\rm and} \quad
f^2 {\rm vol} \sim  {\tilde c}_2 \log(\lambda)^{2 +
  \sqrt{3}} \lambda^{-5}. 
$$
Hence for $p < 1$ both solutions lead to $L^2$ finite solutions,
resulting in a limit-circle case. 

In summary, for all values of $k$ and $s$ the boundary
conditions are either of limit-circle or of limit-point type. 
Furthermore, both end points are non-oscillatory and independent of
the energy eigenvalue $E.$ This allows us to apply theorem
7.5 in \cite{Pryce} which ensures that the spectrum is purely
discrete, bounded below and unbounded above. 

If limit-circle endpoints are present, a Sturm-Liouville problem is not
self-adjoint unless boundary conditions are imposed. For a
non-oscillatory endpoint $a$ there is always one solution which is
``small'', known as the subdominant solution. More precisely, the 
subdominant solution $u(\lambda),$ unique up to a scalar factor, satisfies
\beq
\lim\limits_{\lambda \to a} \frac{u(\lambda)}{v(\lambda)} = 0,
\eeq
where $v(\lambda)$ is any linearly independent solution, see theorem
7.15(i) in \cite{Pryce}. In our case, the subdominant solution is the
non-singular solution, e.g.\ $c_1 \lambda^l$ as $\lambda \to 0.$ A
natural boundary condition is defined by the subdominance condition,
i.e.\ always taking the non-singular solution. Theorem 7.21 in
\cite{Pryce} then guarantees that the subdominance condition defines a
valid self-adjoint problem known as the Friedrichs extension. 

\subsection{Numerical results}

In the following, we briefly sketch our numerical scheme for $H_0$
with $k$ and $s$ fixed.
The equations for $H_1$ and $H_2$ require minor, but straightforward
modifications. We are using a multi-component shooting method which
solves  a collection of initial
value problems (for the eigenvalue equation at fixed $E$) numerically with
a standard adaptive Runge-Kutta method. The first initial value problem
has initial values at $\lambda_0\approx 0$ and the solution $a(\lambda)$
of the
differential equation (\ref{Heq}) is evaluated at a matching point
$\lambda_m \gg  \lambda_0$. Allowing
the initial data to span the $m=2\min(k,s) +1$ dimensional space
specified by the boundary condition at $0$, this produces
a $2m\times m$ matrix $\phi$ whose columns consist of 
$a(\lambda_m)$ above $a'(\lambda_m)$.
The second initial value problem has
initial values at $\lambda_\infty \gg \lambda_m$ and is again
evaluated at $\lambda_m,$ resulting, as the initial
data span the boundary condition, in a $2m\times m$ matrix $\eta$
constructed similarly.
By construction, a general solution satisfying the boundary condition at
$0$ lies, at $\lambda_m$,
 in the column span of $\phi$, while a general solution
satisfying the boundary condition at $\infty$ lies, at $\lambda_m$ in the 
column span of $\eta$. Hence, $E$ is an eigenvalue if these spans have
nontrivial intersection, that is,
if
\beq
\label{deteq}
d(E)=\det(\phi \eta)=0.
\eeq 
Having constructed $d(E)$ numerically, we find its roots using a bisection
method. Typical values of our constants are $\lambda_0 =
0.001,$ $\lambda_m = 3$ and $\lambda_\infty = 30.$

\begin{table}[!ht]
\begin{center}
\begin{tabular}{|c|c|c||c|c|c||c|c|c|}
\hline
\multicolumn{3}{|c||}{\rule[-3mm]{0mm}{10mm} $H_0 = \tfrac{1}{2}
  \triangle$} &  
\multicolumn{3}{|c||}{$H_1 = \tfrac{1}{2} \triangle + \tfrac{1}{4} \kappa$} &
\multicolumn{3}{|c|}{$H_2 = \tfrac{1}{2} \triangle + \tfrac{1}{4}
  \kappa + \Cas$} \\
\hline
\rule[-3mm]{0mm}{8mm} energy & degeneracy & $\{ k, s \}^P$ 
&
energy & degeneracy & $\{ k, s \}^P$ 
& 
energy & degeneracy & $\{ k, s \}^P$ \\
\hline
0.00 & 1 & $\{0,0\}^+$ & 0.22 &  1 & $\{0,0\}^+$ &
-0.41 & 1 & $\{0,0\}^+$ \\
0.13 & 6 & $\{0,1\}^-$ & 0.38 & 6 & $\{0,1\}^-$ & 
-0.26$^*$ & 9 & $\{1,1\}^+$ \\
0.18 & 9 & $ \{1,1\}^+$ & 0.40 & 9 & $ \{1,1\}^+$ & 
-0.19$^*$ & 6 & $ \{0,1\}^-$ \\
0.29 & 1 & $\{0,0\}^+$ & 0.53 & 1 & $\{0,0\}^+$ & 
-0.12 & 1 & $\{0,0\}^+$ \\
0.34 & 9 & $ \{1,1\}^+$ & 0.58 & 9 & $ \{1,1\}^+$ & 
-0.01$^*$ & 9 & $\{1,1\}^-$ \\
0.35 & 10 & $\{0,2\}^+$ & 0.59$^*$ & 9 & $ \{1,1\}^-$ & 
0.00$^*$ & 9 & $ \{1,1\}^+$ \\
0.38 & 9 & $ \{1,1\}^-$ & 0.61$^*$ & 10 & $\{0,2\}^+$ & 
0.01$^*$ & 25 & $\{2,2\}^+$ \\
0.40 & 30 & $\{1,2\}^-$ & 0.63 & 30 & $\{1,2\}^-$ & 
0.06 & 30 & $ \{1,2\}^-$ \\
0.49 & 25 & $\{2,2\}^+$ & 0.70 & 25 & $\{2,2\}^+$ & 
0.06$^*$ & 10 & $ \{0,2\}^+$ \\
0.54 & 6 & $\{0,1\}^-$ & 0.80 & 6 & $\{0,1\}^-$ & 
0.15 & 6 & $\{0,1\}^-$ \\
0.62 & 9 & $ \{1,1\}^+$ & 0.83 & 9 & $ \{1,1\}^+$ & 
0.22 & 9 & $\{1,1\}^+$ \\
0.63 & 30 & $\{1,2\}^-$ & 0.87 & 30 & $\{1,2\}^-$ & 
0.29 & 30 & $\{1,2\}^-$ \\
0.64 & 14 & $\{0,3\}^-$ & 0.88$^*$ & 30 & $\{1,2\}^+$ &
0.32$^*$ &  30 & $\{1,2\}^+$ \\
0.67 & 30 & $\{1,2\}^+$ & 0.91$^*$ & 14 & $\{0,3\}^-$ &
0.33$^*$ & 25 & $\{2,2\}^-$ \\
0.68 & 25 & $\{2,2\}^+$ & 0.93 & 25 & $\{2,2\}^+$ & 
0.34 & 25 & $\{2,2\}^+$ \\
0.69 & 42 & $ \{1,3\}^+$ & 0.95 & 42 & $ \{1,3\}^+$ & 
0.37$^*$ & 14 & $\{0,3\}^-$ \\
0.76 & 25 & $\{2,2\}^-$ & 0.96 & 25 & $\{2,2\}^-$ & 
0.40$^*$ & 42 & $\{1,3\}^+$ \\
0.79 & 70 & $\{2,3\}^-$ & 1.01 & 70 & $\{2,3\}^-$ & 
0.41$^*$ & 1 & $\{0,0\}^+$ \\
0.81 & 1 & $\{0,0\}^+$ & 1.07 & 1 & $\{0,0\}^+$ & 
0.41$^*$ & 49 & $\{3,3\}^+$ \\
\hline
\end{tabular}
\caption{This table shows the lowest energy eigenvalues of $H_0,$
  $H_1$ and $H_2.$ Each energy eigenvalue is labelled with its
  degeneracy and its quantum numbers $k$ and $s$ and parity $P.$
Energy eigenvalues marked with a 
  $*$ occur in a different order than their counterpart in the $H_0$
  spectrum. 
\label{energies}
}
\end{center}
\end{table}

Table \ref{energies} shows the lowest energy levels for $H_0 =
\tfrac{1}{2} \triangle$
which corresponds to the Laplacian for the $L^2$ metric, for $H_1 =
\tfrac{1}{2} \triangle + \tfrac{1}{4} \kappa$ which also includes the
curvature corrections, and for $H_2 = \tfrac{1}{2} \triangle +
\tfrac{1}{4}\kappa + \Cas$ which includes curvature corrections 
and Casimir energy. Energy levels are rounded to two decimal places.
The energy levels are ordered according to their respective energy,
and are labelled by isospin quantum number $k,$ total angular
momentum quantum number $s$ and parity $P.$ Curly brackets indicate that
the states $(k,s)$ and $(s,k)$ have the same energy. States with the
same quantum numbers but different energies form sequences of radially
excited states. We have chosen to display levels with
an energy not greater than the second excited $(k=0,s=0)$ state,
which are 19 energy levels in total. The columns ``degeneracy'' give
the number of different states with the same energy. For $k=s,$ the
degeneracy is given by $(2k+1)^2,$ whereas for $k\neq s$ it is
$2(2k+1)(2s+1),$ where the extra factor of $2$ arises from the $(k,s)
\to (s,k)$ symmetry of the spectrum.

The spectra of $H_0$ and $H_1$ are remarkably similar. The order of the 
energy levels remains the same apart from two exceptions which are 
marked with a $*$ in table \ref{energies} and which will be discussed in 
more detail later. The Casimir energy leads to significant changes, 
nevertheless the spectrum still shares some similarities.
The ground state of $H_0$ has energy $E_0^{(0)}= 0.00.$ The curvature
term increases the energy of the ground state of $H_1$ to $E_0^{(1)} =
0.22,$ whereas the renormalised Casimir energy leads to a decrease in
the ground state of $H_2$ to $E_0^{(2)} = -0.41.$  

In order to compare the energies of the excited states, we shift the
spectrum of the ground states of $H_1$ and $H_2$ to $0.00$ and then
calculate the relative difference of the corresponding energy
levels. Hence, the relative difference in energy between the $n$th excited
state $E_n^{(0)}$ of $H_0$ and the $n$th excited state $E_n^{(1)}$
of $H_1$ is given by 
\beq
\label{relative}
\frac{E_n^{(0)}-\left(E_n^{(1)}-E_0^{(1)}\right)}{E_n^{(0)}}.
\eeq
The relative difference of the $\{0,1\}^-$ states of $H_0$ and $H_1$ is
$15\%,$ and $7.7\%$ for the first excited $\{0,0\}^+$ states. All the
remaining states have a relative difference of less than $7\%.$ 

As mentioned earlier, some energy
levels change order. Therefore, it is useful to calculate the relative
error of states with the same quantum numbers $\{k, s\}^P.$ The first
transposition occurs for the first $\{0,2\}^+$ state and the first 
$\{1,1\}^-$ state. The relative difference between the energy of the
$\{0,2\}^+$ states of $H_0$ and $H_1$ is $10.4\%.$ The second 
transposition occurs for the first $\{0,3\}^-$ and
the first even $\{1,2\}^+$ state, and the $\{0,3\}^-$ states have a 
relative differences of $8.0\%.$ This leads to the observation that
the difference between positive and negative parity states with the
same $\{k,s\}$ is reduced for $H_1$ compared to $H_0$. Furthermore, $H_1$ 
seem to favour states with $k \approx s.$ 

As can be seen in table \ref{energies} the spectrum of $H_2$ shows many 
transpositions compared to the spectra of $H_0$ and $H_1.$ However, these 
transpositions occur for states which are close in energy, and the 
relative positions of the 
$\{0,0\}^+$ state and the $\{0,0\}^+$ excited states remain almost
unchanged. Calculating the relative differences as in formula 
(\ref{relative}) shows that all relative differences are less than 
$24\%.$ 

\subsection{Changing the radii}\label{scaling}

It is interesting to consider how our results change if the radii $R_1,R_2$
of the domain and target spheres are altered. 
We adopt the convention that a tilde signifies the case of general $R_1,R_2$
while undecorated variables refer to the case $R_1=R_2=1$. 
It is immediate from 
(\ref{marmes}) that the $L^2$ metric on $\rat_n$ scales as $\wt\gamma=
R_1^2R_2^2\gamma$, and hence, from (\ref{lap}), we see that the Laplacian
scales as $\wt\triangle=(R_1R_2)^{-2}\triangle$. Now scalar curvature
scales in the same way as the Laplacian (as can easily be seen, in this case,
from the formula for $\kappa$ in \cite{spe3}, for example). Hence, 
the Hamiltonians $H_0$ and $H_1$,  scale homogeneously,
$\wt{H}_i=(R_1R_2)^{-2}H_i$, and so their spectra can be obtained from
table \ref{energies} by a simple rescaling. Now the energies shown in this
table are quantum {\em corrections} to the classical energy, which, by 
the Lichnerowicz bound is $4\pi R_2^2$. Hence, for $H_0$, $H_1$, the
total energy of the $k$th energy level is
\beq
\wt{E}_k^{\rm total}=4\pi R_2^2+\frac{E_k}{R_1^2R_2^2}
\eeq
where $E_k$ is the eigenvalue of $H_i$. It is interesting to note that
the quantum correction becomes (naively) dominant in the case of small
target space ($R_2$ small).

The behaviour of $H_2$ is more subtle because the Casmir energy
scales differently from the other terms. To see this, we determine
how the Jacobi operator for a harmonic map $\phi:M\ra N$ scales under 
homotheties of $M$ and $N$.

\begin{prop}\label{homothety}
 Let $\vphi:(M^m,g)\ra (N^n,h)$ be harmonic with Jacobi
operator $\jac$. If $\wt{g}=R_1^2\, g$, $\wt{h}=R_2^2\, h$, where 
$R_1,R_2>0$ are constants, then the Jacobi operator of $\vphi$ as a
harmonic map $(M,\wt{g})\ra (N,\wt{h})$ is
$$
\wt{\jac}=R_1^{-2}\jac.
$$
\end{prop}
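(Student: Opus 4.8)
The plan is to read off the scaling directly from the explicit expression $\jac=\triangle_\vphi-\rr_\vphi$ recorded above, rather than recomputing the second variation from scratch. The single observation that makes everything work is that a \emph{constant} rescaling of a metric leaves its Levi-Civita connection unchanged: the Christoffel symbols of $R^2g$ and of $g$ agree, since the factor $R^{-2}$ from the inverse metric cancels the factor $R^2$ from the undifferentiated metric coefficients. Consequently the Levi-Civita connections of $(M,\wt g)$ and $(N,\wt h)$ coincide with those of $(M,g)$ and $(N,h)$, so the pullback connection $\nabla^\vphi$ and the $(1,3)$ Riemann tensor $R^N$ appearing in $\triangle_\vphi$ and $\rr_\vphi$ are \emph{identical} in the two setups. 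As a preliminary remark, the tension field therefore scales as $\tau_{\wt g}(\vphi)=R_1^{-2}\tau_g(\vphi)=0$, so $\vphi$ remains harmonic and $\wt\jac$ is indeed well defined.

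Given this, the only metric-dependent ingredient in either operator is the choice of $g$-orthonormal frame on $M$, which for $\wt g=R_1^2g$ becomes $\wt E_i=R_1^{-1}E_i$. First I would treat the rough Laplacian. Since $\nabla^\vphi$ and $\nabla^M$ are unchanged and a connection is tensorial in its direction slot, each of the two terms $\nabla^\vphi_{\wt E_i}(\nabla^\vphi_{\wt E_i}Y)$ and $\nabla^\vphi_{\nabla^M_{\wt E_i}\wt E_i}Y$ carries two factors of $R_1^{-1}$, giving $\wt\triangle_\vphi=R_1^{-2}\triangle_\vphi$. Next the curvature term: in $\rr_\vphi Y=\sum_iR^N(Y,\d\vphi E_i)\d\vphi E_i$ the frame enters twice, via $\d\vphi\wt E_i=R_1^{-1}\d\vphi E_i$ in each slot, so $\wt\rr_\vphi=R_1^{-2}\rr_\vphi$. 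Subtracting yields $\wt\jac=R_1^{-2}(\triangle_\vphi-\rr_\vphi)=R_1^{-2}\jac$.

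It is worth flagging explicitly that the target homothety $R_2$ drops out altogether: both $\nabla^N$ and the $(1,3)$ tensor $R^N$ are invariant under $h\mapsto R_2^2h$, and $\rr_\vphi$ is assembled from $R^N$ and the differential $\d\vphi$ rather than from $h$ itself, so no factor of $R_2$ survives. As a cross-check I would run the variational argument in parallel: the Dirichlet energy rescales by the positive constant $R_1^{m-2}R_2^2$ (from $\|\d\vphi\|^2\mapsto R_1^{-2}R_2^2\|\d\vphi\|^2$ together with $\vol_g\mapsto R_1^m\vol_g$), hence so does $\hess_\vphi$, while the $L^2$ pairing defining $\jac$ rescales by $R_1^mR_2^2$; the quotient of these two factors is exactly $R_1^{-2}$, reproducing the claim and confirming the cancellation of $R_2$. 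There is no serious obstacle here, since the content is entirely bookkeeping; the only point needing genuine care is the opening claim that a constant conformal factor leaves the connection and the $(1,3)$ curvature tensor untouched, which should be invoked explicitly rather than assumed silently.
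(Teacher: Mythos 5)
Your proposal is correct, but your primary argument is genuinely different from the paper's. The paper proves the proposition purely variationally: since $\wt{E}(\vphi_{s,t})=R_1^{m-2}R_2^2\,E(\vphi_{s,t})$ for any variation, the Hessian scales by the same constant, and rewriting $\hess(X,Y)=\int_M h(X,\jac Y)\,{\rm vol}$ in terms of $\wt{h}=R_2^2h$ and $\wt{\rm vol}=R_1^m\,{\rm vol}$ forces $\wt{\jac}=R_1^{-2}\jac$ — this is precisely the ``cross-check'' you relegate to the end, and it is a two-line proof that never needs the explicit formula $\jac_\vphi=\triangle_\vphi-\rr_\vphi$. Your main route instead works at the operator level: the key observation that a constant homothety leaves the Levi-Civita connections, the pullback connection and the $(1,3)$ curvature tensor $R^N$ unchanged is correct, and from it the scalings $\wt{\triangle}_\vphi=R_1^{-2}\triangle_\vphi$ (two factors of $R_1^{-1}$ from the rescaled orthonormal frame $\wt{E}_i=R_1^{-1}E_i$ in each term) and $\wt{\rr}_\vphi=R_1^{-2}\rr_\vphi$ follow exactly as you say, as does the preservation of harmonicity via the tension field. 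What your approach buys is transparency about \emph{why} $R_2$ drops out — the connection and $(1,3)$ curvature are scale-invariant, so the target homothety simply never enters the operator — whereas in the paper's argument the cancellation of $R_2^2$ between the energy scaling and the $L^2$ pairing looks like a happy accident of bookkeeping. What the paper's approach buys is brevity and independence from the Weitzenb\"ock-type formula: it needs only the definition of $\jac$ through the second variation, and it makes manifest that $\vphi$ remains harmonic because $\wt{E}$ and $E$ have the same critical points. Both arguments are complete; yours is the more instructive, the paper's the more economical.
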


\begin{proof}
Let $\vphi_{s,t}:M\ra N$ be
a smooth two-parameter variation of $\vphi=\vphi_{0,0}$, with
$\cd_s\vphi_{s,t}|_{s=t=0}=X$, 
$\cd_t\vphi_{s,t}|_{s=t=0}=Y\in\Gamma(\vphi^{-1}TN)$. We have, in obvious
notation,
\bea
\wt{E}(\vphi_{s,t})&=&R_1^{m-2}R_2^2E(\vphi_{s,t})\nonumber \\
\Rightarrow\quad
\wt{\hess}(X,Y)&=&
\left.\frac{\cd^2\:\:}{\cd s\, \cd t}\wt{E}(\vphi_{s,t})\right|_{s=t=0}
=R_1^{m-2}R_2^2\, \hess(X,Y)\nonumber\\
&=&R_1^{m-2}R_2^2\int_Mh(X,\jac Y)\vol
=\int_M\wt{h}(X,R^{-2}_1\jac Y)\wt{\vol}. \nonumber
\eea
\end{proof}

Since $\Cas$ is (formally) a sum of {\em square roots} of eigenvalues
of $\jac$, we see that it scales as $\wt\Cas=R_1^{-1}\Cas$. Hence,
the scaling behaviour of $H_2$ is inhomogeneous, and, except in the
special case that $R_1R_2^{-2}=1$, the spectrum of $\wt{H}_2$
cannot be deduced directly from the spectrum of $H_2$. Since $\Cas$ is 
independent of the radius $R_2$ whereas the spectra of $H_0$ and $H_1$ 
scale like $1/R_2^2,$ the effect of the Casimir energy becomes less 
important for small $R_2.$ 
We have also numerically checked that a smaller radius of the target 
sphere reduces the effect of the Casimir energy term significantly. 
For example for $R_1=1$ and $R_2=\frac{1}{4},$ there is only one 
transposition, compared to the $H_1$ spectrum.

Throughout this paper we have used the convention that
  $\hbar=1.$ If we reinstate the parameter $\hbar$ then the classical
energy scales like $\hbar^0,$ the Casimir energy scales like $\hbar^1$ 
whereas the Laplacian and curvature terms scale like $\hbar^2.$ Hence,
$\hbar$ can be removed by redefining $R_1 \mapsto R_1/\hbar,$ so the
semi-classical limit $\hbar \to 0$ coincides (formally) with the
planar limit $R_1 \to \infty.$

\begin{figure}[!ht]
\begin{center}
\includegraphics[scale=0.8]{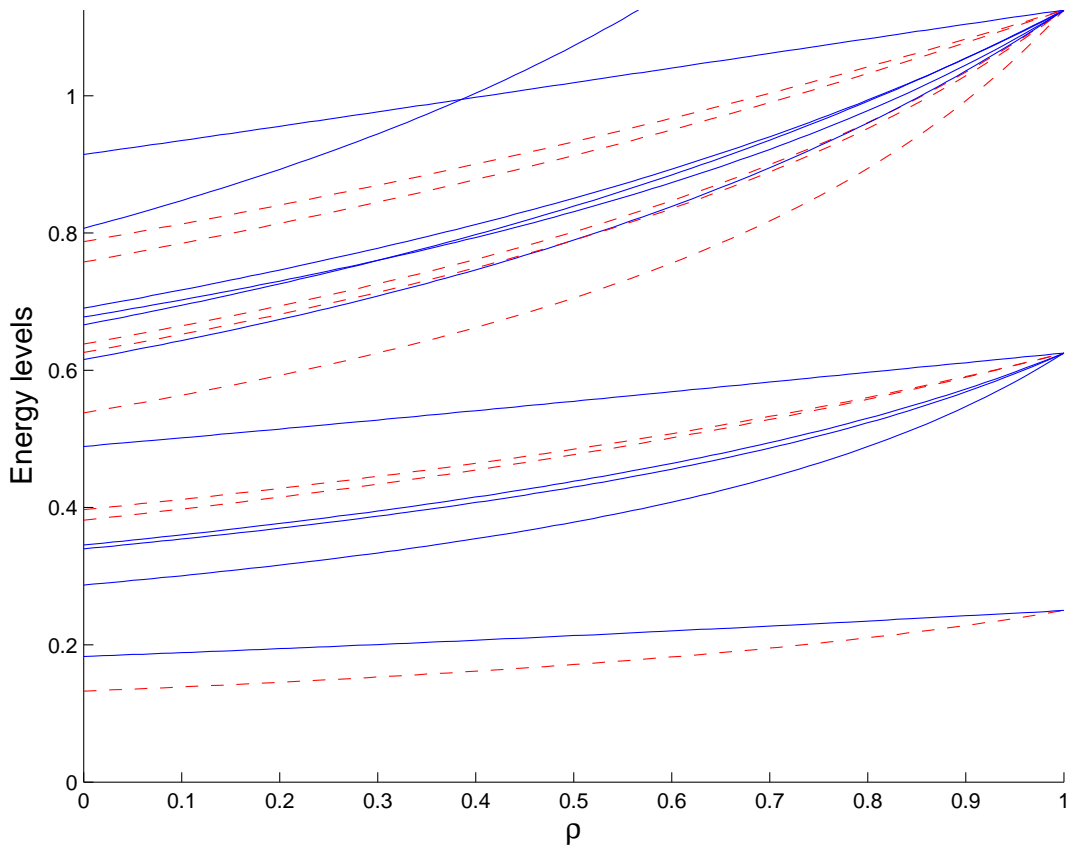}
\caption{Energy levels interpolating between the $L^2$ metric and the 
${\mathbb CP}^3$ metric. Here solid lines correspond to positive and
  dashed lines to negative parity.\label{energyflow}}
\end{center}
\end{figure}

\subsection{Deformation to the Fubini-Study metric}
As a nontrivial test of our calculations we calculate how the spectrum
changes as the metric is smoothly deformed from the $L^2$ metric
to the well-known Fubini-Study metric of ${\mathbb CP}^3,$ where the
spectrum and degeneracy of the Laplace operator have been calculated
explicitly. We consider the one-parameter family of
$SO(3)\times SO(3)$ invariant
k\"ahler metrics defined, as in Proposition \ref{metprop}, 
by the functions
\beq
\label{As}
A_\rho(\lambda) = 32 \rho A_{FS} (\lambda) + (1-\rho) A_{L^2}(\lambda),\qquad
0\leq \rho\leq 1, 
\eeq
where $A_{FS}$ and $A_{L^2}$ are given in (\ref{AFS}) and (\ref{Adef}),
respectively, and the factor of $32$ ensures that the eigenvalues of the 
two Laplacians are of the same order of magnitude. 
The $p$th 
eigenvalue of the Laplacian for the Fubini-Study metric on $\CP^3$
with constant holomorphic sectional curvature $4$ (and hence coefficient 
function $A_{FS}$) is 
\beq
E_p = 4p(p+3)
\eeq
with degeneracy
\beq
  \deg (E_p) =  \tfrac{3}{4}(2p+3)(p+1)^2(p+2)^2,
\eeq
for $p\geq 1$ \cite{ber}. 
Although it is not made clear in \cite{ber}, the
degeneracy of $E_0=0$ must be $1$ by the Hodge isomorphism theorem, since
$\CP^3$ is connected, hence $\dim H^0(\CP^3)=1$ which equals the dimension 
of the space of harmonic $0$-forms. 
Note that $\deg(E_p)$ for $p\geq 1$ is an integer divisible by $9.$
In figure \ref{energyflow} we show how the energy levels change as the
parameter $\rho$ in (\ref{As}) is increased from $0$ (the $L^2$ metric)
to $1$ (the Fubini-Study metric). We
distinguish between states with even and odd parity. 
The ground state is $E_0=0$ for both
metrics. In figure \ref{energyflow} we follow all the energy levels of
table \ref{energies}, and it can be seen how the different energy
levels become degenerate at $\rho=1,$ the Fubini-Study limit. Our
numerically computed energy levels agree with the exact result to 
at least two decimal places, when we take into account the factor 
$\frac{1}{2}$ in $H_0$ and normalize the Fubini-Study energy by a
factor of $32,$ which arises from the factor $32$ in (\ref{As}). 

Our numerical scheme at $\rho=1$ does not find all eigenfunctions of
$\triangle$ on $\CP^3$, because, except for the ground state,
we impose a boundary condition
which forces the wavefunction to vanish on $\cd_\infty\rat_1$,
 the boundary of $\rat_1$
at infinity. It is known \cite{spe3} that $\cd_\infty\rat_1$ coincides with
the image of the inclusion $\CP^1\times\CP^1\hookrightarrow\CP^3$,
$([a_0,a_1],[b_0,b_1])\mapsto [a_0b_1,a_1b_1,a_1b_1,a_1b_0]$. 
The stabilizer of this subset of $\CP^3$ in $SU(4)$, the isometry group
of $\CP^3$, is isomorphic to $SU(2)\times SU(2)$ (the isometry
group of $\rat_1$ itself). So, given an eigenfunction vanishing
on $\cd_\infty\rat_1$, the $SU(4)$ action generates a 
$9$-dimensional orbit of degenerate eigenfunctions which do 
not vanish on $\cd_\infty\rat_1$ (because $\dim SU(4)=15$ and
$\dim SU(2)\times SU(2)=6$). Hence, we expect the degeneracies 
found by our numerics 
(at $\rho=1$)
to be a factor $9$ smaller than the
degeneracies of the true Laplacian on $\CP^3$.
This is precisely what we find. 
For example, the degeneracy of the second 
${\mathbb C}P^3$ eigenvalue is  $\deg (E_1)/9 = 15,$ which 
corresponds to the levels $\{0,1\}$ and $\{1,1\}$ with degeneracy
$2*3+3*3=15,$ see table \ref{energies}. 
Similarly, $\deg (E_2)/9 = 84,$ corresponds to $\{0,0\},$ $\{1,1\},$
  $\{0,2\},$ $\{1,1\},$ $\{1,2\},$ and $\{2,2\}$ with
  $1+3*3+2*5+3*3+2*3*5+5*5 =84.$ Further, $\deg (E_3)/9 = 300,$ which
corresponds to $\{0,1\},$ $\{1,1\},$ 
  $\{1,2\},$ $\{0,3\},$ $\{1,2\},$ $\{2,2\},$
 $\{1,3\},$ $\{2,2\},$ $\{2,3\},$ and $\{3,3\},$
whose degeneracies add up to $300,$ as expected. To display all the
levels which contribute to the fourth energy eigenvalue of 
${\mathbb C}P^3$ we have also included the $\{3,3 \}$ level in figure
\ref{energyflow}. Note that the second excited $\{0,0\}$ state
contributes to the fifth energy eigenvalue of ${\mathbb C}P^3.$

\section{Concluding remarks}\label{conclusion}
\news

In this paper we discussed the semi-classical quantization of soliton 
dynamics for ${\mathbb C}P^1$ lumps moving on a 2-sphere. We followed Moss 
and Shiiki \cite{mosshi} who derived a Born-Oppenheimer approximation to the 
quantum dynamics based on the moduli space approximation. 
We were able to evaluate three different truncations of the 
Born-Oppenheimer Hamiltonian $H_{BO},$ namely the geometric Laplacian $H_0 
= \frac{1}{2} \triangle,$ the first geometric correction $H_1$ given by 
the scalar curvature $\kappa$ of the moduli space, and the Hamiltonian $H_2$ 
which consists of $H_1$ together with the Casimir energy. The 
Casimir energy is notoriously difficult to evaluate. At $\lambda=0$
and $\lambda = \infty$ the spectrum of the Jacobi operator is known
explicitly. Our approach is to calculate the regularized Casimir
energy using zeta-function regularization for $\lambda = 0$ and
$\lambda = \infty.$ Then the intermediate values are calculated using
an approximate self-similarity of the eigenvalue spectrum. 
We calculated the first $19$ energy levels for $H_0,$ $H_1$ and $H_2,$
and found that the first two spectra are remarkably similar whereas the 
inclusion of the Casimir energy leads to significant changes. 
There is an overall shift of the spectrum of $H_1$ and $H_2.$ The relative 
errors between $H_0$ and $H_1$ are $15\%$ for the first excited state and
less that $8\%$ for other excited states. By contrast, the relative errors
between $H_0$ and $H_2$ are as high as $24\%$ for some excited states. 
There are only two transpositions of states where energy levels of $H_1$ 
do not have the same order compared to the $H_0$ spectrum. However, the 
spectrum of $H_2$ shows many transpositions.

We proved that all the spectra enjoy a rather surprising spin-isospin 
exchange symmetry. The proof rested on the identification of a hidden
isometry of $\rat_1$ which becomes manifest only after one identifies
$\rat_1\cong TSO(3)$. As a non-trivial check of our calculations we
calculated how the energy levels change when we interpolate between the
$L^2$ metric and the Fubini-Study metric on ${\mathbb C}P^3.$  We
reproduced the known spectrum for ${\mathbb C}P^3$ and discussed the
degeneracies of the energy levels.

Our approach allows us to calculate the spectrum of the Laplacian for
any $SO(3) \times SO(3)$ invariant k\"ahler metric on $\rat_1$.
One interesting choice is $A = c/\Lambda,$ for which 
the coefficient of 
$(\lamvec \cdot {\bf S})^2$ vanishes in (\ref{lapnew}). In this case,
the angular momentum operator ${\bf L}^2$ also commutes with the
Hamiltonian and $l$ becomes an additional quantum number. 

The lump dynamics serves as a toy model for other solitons. For
example in three spatial dimensions, two physically relevant models are
the Skyrme model \cite{skyrme} and the Faddeev-Hopf model
\cite{fadd, faddnie}. In both models, the solitons can be quantized as
fermions due to so-called Finkelstein-Rubinstein constraints
\cite{finrub}. Adkins, Nappi and Witten first quantized the $B=1$ Skyrmion 
in \cite{Adkins} where $B$ is the topological charge. The effects of the 
Casimir energy on the predictions of the Skyrme model in the $B=1$
sector have   
been discussed in great detail in \cite{Meier}. The authors calculated the 
1-loop corrections to various physical quantities using phase-shift 
techniques to evaluate the Casimir energy. For higher topological charge 
the ground and lowest excited states in the 
Skyrme model and the Faddeev-Hopf model were calculated in 
\cite{Irwin, kru, Battye} and \cite{kruspe} using zero-mode quantization 
and Finkelstein-Rubinstein constraints. The results of this paper suggest 
that the order of the states would remain the same, with minor changes, if 
the curvature correction in the Born-Oppenheimer approximation was taken 
into account. However, the Casimir energy could lead to significant 
changes in the order of states. A more careful analysis of higher order 
terms in these models, and in particular, of the Casimir energy, would be 
very useful.

\section*{Acknowledgments}

SK is grateful to NS Manton for interesting discussions at various
stages of the project. The authors would like to thank James McGlade
and Lamia Al Qahtani for useful discussions. McGlade derived the
spectrum for zero isospin in his thesis \cite{McGlade}.

\section*{Appendix}
\appendix

\section{The rough Laplacian}

In this appendix we will compute in detail the action of the rough Laplacian
$\triangle_\vphi$ on sections $Y$ of $\vphi^{-1}TN$ of the form
\beq
Y=a(\theta)\cos m\phi\wt{E}_1.
\eeq
Recall that $(M,g)=(N,h)=$ the unit sphere, $(\theta,\phi)$ are the usual
polar coordinates on $M$ or $N$, $E_1=\cd/\cd\theta$, $E_2=\cosec\theta\cd/\cd
\phi$ is an orthonormal frame on $M$ or $N$, $\wt{E}_1=E_1\circ\vphi$,
$\wt{E}_2\circ\vphi$, and $\vphi:M\ra N$ is a map of the form
$\vphi(\theta,\phi)=(f(\theta),\phi)$ in polar coordinates. Now, 
given a map between Riemannian manifolds $\vphi:M\ra N$, the pullback
connexion $\nabla^\vphi$ is the unique connexion on the vector bundle
$\vphi^{-1}TN$ satisfying the axioms
\bea
\nabla^\vphi_X(Y_1+Y_2)&=&\nabla^\vphi_XY_1+\nabla^\vphi_XY_2\label{a1}\\
\nabla^\vphi_X(fY_1)&=&X[f]Y_1+f\nabla^\vphi_XY_1\label{a2}\\
\nabla^\vphi_X(Y\circ\vphi)&=&(\nabla^N_{\d\vphi X}Y)\circ\vphi\label{a3}
\eea
for all $X\in\Gamma(TM)$, $Y_1,Y_2\in\Gamma(\vphi^{-1}TN)$, $f\in C^\infty(M)$
and $Y\in\Gamma(TN)$, where $\nabla^N$ is the Levi-Civita connexion on
$(N,h)$. In our case $\nabla^N$ (which coincides with $\nabla^M$) is
determined by its action on the frame $\{E_1,E_2\}$,
\beq\label{a4}
\nabla^NE_1=\cot\theta\, e_2\otimes E_2,\qquad
\nabla^NE_2=-\cot\theta\, e_2\otimes E_1,
\eeq
where $\{e_1,e_2\}$ is the coframe dual to $\{E_1,E_2\}$. From this we may 
deduce how $\nabla^\vphi$ acts on $\wt{E}_1,\wt{E}_2$. By property
(\ref{a3}) and (\ref{a4}),
\bea
\nabla^\vphi_{E_1}\wt{E}_1&=&(\nabla^N_{f'(\theta)E_1}E_1)\circ\vphi=0\\
\nabla^\vphi_{E_1}\wt{E}_2&=&(\nabla^N_{f'(\theta)E_1}E_2)\circ\vphi=0\\
\nabla^\vphi_{E_2}\wt{E}_1&=&
(\nabla^N_{\frac{\sin f}{\sin\theta}E_2}E_1)\circ\vphi 
=\frac{\cos f}{\sin\theta}\wt{E}_2\\
\nabla^\vphi_{E_2}\wt{E}_2&=&
(\nabla^N_{\frac{\sin f}{\sin\theta}E_2}E_2)\circ\vphi 
=-\frac{\cos f}{\sin\theta}\wt{E}_1.
\eea

The rough Laplacian has 4 terms,
\beq
\triangle_\vphi Y=-\nabla^\vphi_{E_1}(\nabla^\vphi_{E_1}Y)
-\nabla^\vphi_{E_2}(\nabla^\vphi_{E_2}Y)
+\nabla^\vphi_{\nabla^M_{E_1}E_1}Y
+\nabla^\vphi_{\nabla^M_{E_2}E_2}Y.
\eeq
We evaluate these in turn, for $Y=a(\theta)\cos m\phi\wt{E}_1$. First
\beq
\nabla^\vphi_{E_1}(\nabla^\vphi_{E_1}a\cos m\phi \wt{E}_1)=
\frac{\cd^2\:}{\cd\theta^2}(a\cos m\phi)\wt{E}_1=a''\cos m\phi\wt{E}_1.
\eeq
The second term is more involved: 
\bea
\nabla^\vphi_{E_2}(a\cos m\phi\wt{E}_1)&=&aE_2[\cos m\phi]\wt{E}_1+
a\cos m\phi\nabla^\vphi_{E_2}\wt{E}_1\nonumber \\
&=&-ma\frac{\sin m\phi}{\sin\theta}\wt{E}_1+
a\frac{\cos m\phi\cos f}{\sin\theta} 
\wt{E}_2\\
\Rightarrow
\nabla^\vphi_{E_2}(\nabla^\vphi_{E_2}a\cos m\phi\wt{E}_1)&=&
-m^2a\frac{\cos m\phi}{\sin^2\theta}\wt{E}_1
-2ma\frac{\sin m\phi\cos f}{\sin^2\theta}\wt{E}_2\nonumber\\
&&-a\frac{\cos m\phi\cos^2 f}{\sin^2\theta}\wt{E}_1.
\eea
The third term vanishes since $\nabla^M_{E_1}E_1=0$, while the fourth term is
\beq
\nabla^\vphi_{\nabla^M_{E_2}E_2}(a\cos m\phi \wt{E}_1)=-\cot\theta
\nabla^\vphi_{E_1}(a\cos m\phi \wt{E}_1)=-a'\cot\theta\cos m\phi\wt{E}_1.
\eeq
Assembling the pieces, one sees that
\bea
\triangle_\vphi(a\cos m\phi\wt{E}_1)&=&\left(-a''-a'\cot\theta
+\frac{m^2}{\sin^2\theta}a+\frac{\cos^2 f}{\sin^2\theta}a\right)
\cos m\phi\wt{E}_1\nonumber \\
&&+\left(\frac{2m\cos f}{\sin^2\theta}a\right)\sin m\phi\wt{E}_2,
\eea
as claimed in equation (\ref{raaba}).

\section{Angular momentum calculations}
\label{AngularMomentum}
\news

In this appendix we briefly describe the evaluation of the operator 
$(\lamvec \cdot {\bf S})^2.$ We closely follow the notation in \cite{Edmonds}.
Note the spin quantum number $j$ satisfies $j=k$ since ${\bf J}^2={\bf 
K}^2$.
We choose the following convention for the spherical harmonics, 
\beq
Y_{l,m}(\theta, \phi) = (-1)^m
\sqrt{ \frac{(2l+1)}{4 \pi} \frac{(l-m)!}{(l+m)!}}
{P_l}^m(\cos \theta) \exp(i m \phi),
\eeq
where the associated Legendre polynomials are given by
\beq
{P_l}^m(x) = \frac{(1-x^2)^{m/2}}{2^l l!} \frac{d^{l+m}}{dx^{l+m}} 
(x^2-1)^l.
\eeq
Note that $Y^*_{l,m} = (-1)^m Y_{l,-m}$ and
\beq
\int\limits_{0}^{2 \pi} \int\limits_{0}^{\pi}
Y^*_{l,m}(\theta,\phi)  Y_{l^\prime,m^\prime}(\theta, \phi) 
\sin \theta\, {\rm d}\theta {\rm d}\phi 
= \delta_{ll^\prime} \delta_{m m^\prime}.
\eeq

It is easy to show that $(\lamvec \cdot {\bf S}) = (\lamvec \cdot {\bf 
J}),$ and we showed in section \ref{LRat1} that this operator commutes 
with ${\bf S}$ and ${\bf J}^2,$ see equation (\ref{Xcommutes}) 
where ${\bf J} = -i \thetavec$ and ${\bf S} = -i {\bf X}.$ Hence, the
operator does not change the quantum numbers $s,$ $k,$ $k_3$ and
$s_3.$ Furthermore, the matrix elements are independent of $s_3$ and
$k_3,$ hence we can choose without loss of generality $s_3=0$ and
$k_3=0$. We need to evaluate 
\beq
\label{M1}
M \equiv \left(M_{{\tilde l}l}\right) = 
\langle k,{\tilde l}, s, 0 | 
\lamvec \cdot {\bf J} 
| k,l,s,0 \rangle ,
\eeq
which is a $(2 \min(k,s) +1)$ by $(2 \min(k,s) + 1)$ matrix, because the 
angular momentum quantum numbers satisfy $\min(k,s) \le l,{\tilde l} \le 
k+s$. Since $\lamvec \cdot {\bf J}$ is a Hermitian operator the matrix
$M$ is Hermitian, namely 
\beq
\label{herm}
M_{{\tilde l}l} = M_{l{\tilde l}}^\dagger.
\eeq
Using Clebsch-Gordan coefficients (\ref{M1}) can be rewritten as
\beq
M = \sum\limits_{m_1,m_2} 
(k, {\tilde l}, s, 0 | k, m_1, {\tilde l}, -m_1)
\langle k, m_1, {\tilde l}, -m_1|\lamvec \cdot {\bf J} 
| k, m_2, l, -m_2\rangle (k,m_2,l,-m_2|k, l, s, 0).
\eeq
The operator $\lamvec\cdot {\bf J}$ can be written in terms of spherical 
harmonics as
\beq
\lamvec\cdot {\bf J} = 
\frac{2 \sqrt{\pi}\lambda}{\sqrt{3}}
\left(
-\frac{1}{\sqrt{2}} Y_{1,1} J_- + \frac{1}{\sqrt{2}} Y_{1,-1} J_+ 
+ Y_{1,0} J_3
\right).
\eeq
Hence, we only need to evaluate
\bea
\langle k,m_1|J_+|k, m_2\rangle &=&
\delta_{m_1,m_2+1} \sqrt{(k-m_2)(k+m_2+1)}, \\
\langle k,m_1|J_-|k, m_2\rangle &=&
\delta_{m_1,m_2-1} \sqrt{(k+m_2)(k-m_2+1)}, \\
\langle k,m_1|J_3|k, m_2\rangle &=&
\delta_{m_1,m_2} m_2,
\eea
and
\beq
\langle {\tilde l},m_1 | Y_{l,m} | l,m_2 \rangle,
\eeq
which can be calculated using (4.6.3) in \cite{Edmonds}, namely,
\bea
&&\int\limits_0^{2\pi} \int\limits_0^\pi
Y_{l_1,m_1}(\theta, \phi) 
Y_{l_2,m_2}(\theta, \phi) 
Y_{l_3,m_3}(\theta, \phi) 
\sin\theta {\rm d} \theta {\rm d} \phi 
\\
\label{3jeq}
&&= 
\sqrt{\frac{(2l_1+1)(2l_2+1)(2l_3+1)}{4\pi}}
\left(
\begin{array}{ccc}
l_1&l_2&l_3\\
0&0&0
\end{array}
\right)
\left(
\begin{array}{ccc}
l_1&l_2&l_3\\
m_1&m_2&m_3
\end{array}
\right),
\eea
where the two matrices in (\ref{3jeq}) 
correspond to $3j$ symbols. The $3j$ symbol 
\beq
\left(
\begin{array}{ccc}
{\tilde l}&l&1\\
0&0&0
\end{array}
\right)
\eeq
vanishes unless ${\tilde l} = l \pm 1,$ which can be used to show that 
$M$ vanishes in the cases $k=0$ or $s=0$. For example, for $k=0$ the 
only allowed values of the angular momentum are $l=s$ and ${\tilde l} = 
s,$ which vanishes since ${\tilde l} \neq l \pm 1.$  

Following the
notation in \cite{Edmonds}, the Clebsch-Gordan coefficients and the
$3j$ symbols have real entries. Hence, $M$ is a real matrix and
(\ref{herm}) implies that $M$ is symmetric. 
Note that the eigenvalues of 
$M$ are $0, \pm 1, \pm 2, \dots, 
\pm \min(k,s),$ which can be explained as follows. 
Heuristically, $(\lamvec \cdot {\bf S})$ is the projection of
the total angular momentum operator onto the $\lamvec$
direction. Hence, it can be rotated by a change of variables to $S_3$
which has eigenvalues $0, \pm 1, \dots \pm \min(k,s).$ The highest 
possible value of $s_3$ is $\min(k,s)$ because the dimension
of $M$ is $2 \min(k,s) + 1.$  

Now, we can evaluate the matrix  (\ref{M1}) for various values of $k$
and $s$ using Maple, for example. For small values the formulae are 
more tractable. For example for $(k, s) =  (1,s)$ we obtain
\beq
\label{M2ex}
M^2 = 
\langle 1,{\tilde l}, s, 0 | 
(\lamvec \cdot {\bf J})^2
| 1,l,s,0 \rangle =
\frac{\lambda^2}{2s+1}
\left(
\begin{array}{ccc}
s+1 & 0     & \sqrt{s(s+1)} \\
0   & 2s+1 & 0\\
\sqrt{s(s+1)} &  0 & s
\end{array}
\right).
\eeq
We showed in section \ref{interchange} that there is a $(k,s) \to (s,k)$ 
symmetry for the respective matrices $M.$ This was verified by explicit 
calculation 
for all  values $(k,s)$ in table \ref{energies}.
As can be seen in (\ref{M2ex}) there is a natural chess board
structure in the matrix $M^2,$ which follows from the fact that $M$ 
vanishes unless ${\tilde l} = l \pm 1,$ and hence $M^2$ vanishes 
unless ${\tilde l} = l+2,$ $l,$ or $l-2.$

\end{document}